  \pgfplotsset{compat=newest}
\newtheorem{theorem}{Theorem}
\newtheorem{lemma}{Lemma}
\theoremstyle{definition}
\theoremstyle{definition}
\newcommand{\overbar}[1]{\mkern 1.25mu\overline{\mkern-1.25mu#1\mkern-0.25mu}\mkern 0.25mu}
\newcommand{\ER}{{\rm ER}}
\newcommand{\PE}{{\rm PE}}
\newcommand{\peset}{{\rm pe}_{h_0,h_1}}
\newcommand{\tmul}{t_{\rm mul}}
\newcommand{\Ttil}{\widetilde{T}}
\newcommand{\Btil}{\widetilde{B}}
\newcommand{\Ttilmul}{\widetilde{T}_{\rm extra}}
\newcommand{\qtil}{\widetilde{q}}
\newcommand{\qtilmul}{\widetilde{q}_{\rm extra}}
\newcommand{\Ghat}{\widehat{G}}
\newcommand{\Ehat}{\widehat{E}}
\newcommand{\kbar}{\overbar{k}}
\newcommand{\dmax}{d_{\mathrm{max}}}
\newcommand{\Bernoulli}{\mathrm{Bernoulli}}
\newcommand{\Binomial}{\mathrm{Binomial}}
\newcommand{\pe}{P_{\mathrm{e}}}
\newcommand{\Etilde}{\widetilde{E}}
\newcommand{\Yv}{\mathbf{Y}}
\newcommand{\Ac}{\mathcal{A}}
\newcommand{\Bc}{\mathcal{B}}
\newcommand{\Gc}{\mathcal{G}}
\newcommand{\Tc}{\mathcal{T}}
\newcommand{\EE}{\mathbb{E}}
\newcommand{\PP}{\mathbb{P}}
\newcommand{\Lc}{\mathcal{L}}
\title{Learning Erd\H{o}s-R\'enyi Random Graphs \\ via Edge Detecting Queries}
\author{Zihan Li, Matthias Fresacher, and Jonathan Scarlett}
\begin{document}
% \nipsfinalcopy is no longer used

\maketitle

\begin{abstract}
    In this paper, we consider the problem of learning an unknown graph via queries on groups of nodes, with the result indicating whether or not at least one edge is present among those nodes.  While learning arbitrary graphs with $n$ nodes and $k$ edges is known to be hard in the sense of requiring $\Omega( \min\{ k^2 \log n, n^2\})$ tests (even when a small probability of error is allowed), we show that learning an Erd\H{o}s-R\'enyi random graph with an average of $\kbar$ edges is much easier; namely, one can attain asymptotically vanishing error probability with only $O(\kbar \log n)$ tests.  We establish such bounds for a variety of algorithms inspired by the group testing problem, with explicit constant factors indicating a near-optimal number of tests, and in some cases asymptotic optimality including constant factors.  In addition, we present an alternative design that permits a near-optimal sublinear decoding time of $O(\kbar \log^2 \kbar + \kbar \log n)$.
\end{abstract}

\long\def\symbolfootnote[#1]#2{\begingroup\def\thefootnote{\fnsymbol{footnote}}\footnote[#1]{#2}\endgroup}

\symbolfootnote[0]{ 
    Z.~Li is with the Department of Mathematics, National University of Singapore (e-mail: lizihan@u.nus.edu).
    
    M.~Fresacher is with the Faculty of Engineering, Computer \& Mathematical Sciences, University of Adelaide (e-mail: matthias.fresacher@adelaide.edu.au).  This work was done during his time at the National University of Singapore.
    
    J.~Scarlett is with the Department of Computer Science \& Department of Mathematics, National University of Singapore (e-mail: scarlett@comp.nus.edu.sg).
    
    This work was supported by an NUS Early Career Research Award.}

\section{Introduction}

Graphs are a ubiquitous tool in modern statistics and machine learning for depicting interactions, relations, and physical connections in networks, such as social networks, biological networks, sensor networks, and so on.  Often, the graph is not known {\em a priori}, and must be learned via queries to the network.  In this paper, we consider the problem of graph learning via {\em edge detecting queries}, where each query contains a subset of the nodes, and the binary outcome indicates whether or not there is at least one edge among these nodes. See Section \ref{sec:related} for previous work on this problem.

An application of this problem highlighted in previous works such as \cite{Bou05} is that of learning which chemicals react with each other, using tests that are able to detect whether any reaction occurs.  Another potential application is learning connectivity in large wireless networks: Each node has a unique identifier, and in response to a query, a node sends feedback to a central unit if the query includes both itself and at least one of its neighbors.  Then, to attain the query outcome, the central unit only has to detect whether {\em any} feedback signal was received.

We consider the fundamental question of how many queries are needed to learn the graph.  Under {\em adaptive testing} (i.e., tests can be designed based on previous outcomes), this question is well-understood \cite{Joh02}, as outlined below.  However, an impossibility result of \cite{Aba18} indicates that considerably more {\em non-adaptive} tests are needed in the worst-case sense for the class of graphs with a bounded number of edges.  We show that this picture is much more positive in the {\em average-case sense} by studying the average performance with respect to Erd\H{o}s-R\'enyi graphs \cite{Bol01}.  In addition, to demonstrate that these findings are not overly reliant on the specific random graph model,  we also present similar findings assuming only bounds on the number of edges and the maximum degree (see Appendix \ref{sec:gen_d_k}).

\subsection{Related Work}  \label{sec:related}

The problem considered in this paper can be viewed as a constrained group testing problem \cite[Sec.~5.8]{Ald19}.  We highlight the most relevant group testing works throughout the paper, and here simply refer the reader to \cite{Du93} for a survey of the zero-error setting, and to \cite{Ald19} for a survey of the small-error setting (i.e., the algorithm is allowed a small probability of failure).  These settings are fundamentally different, since the number of tests in the small error setting is $O(K \log N)$ (for $K$ defectives among $N$ items), while the  zero-error criterion requires $\Omega(\min\{ N, K^2\})$ tests.

Early works on graph learning via edge detecting queries considered identifying a single edge \cite{Aig86,Aig88} and then several edges \cite{Joh02} in a slightly more general scenario where the ``defective graph'' $G$ is known to be a sub-graph of a larger graph $H$.  Several works considered specific graph classes such as matchings, stars, and cliques \cite{Gre98,Alo04,Alo05}.  We particularly highlight the work of Johann \cite{Joh02}, who gave an adaptive procedure requiring $k \log_2 \frac{|\Etilde|}{k} + O(k)$ tests, where $\Etilde$ is the set of edges in the larger graph $H$; this bound is optimal up to the $O(k)$ remainder term.  More recently, extensions to hypergraphs have also been considered \cite{Ang06,Ang08,Dya16,Aba18a}.

While the adaptive setting is well-understood, the non-adaptive setting \cite{Kam18,Aba18} and adaptive settings with limited stages \cite{Bsh15,Hwa06,Aba18} are more challenging.  We refer the reader to \cite{Aba18} for a recent survey of what is known, with a notable distinction between Monte Carlo and Las Vegas style algorithms.  We highlight that in stark contrast with the standard group testing problem, the number of {\em non-adaptive} tests required to identify arbitrary graphs with $k$ edges and $n$ nodes is at least $\Omega(\min\{k^2 \log n,n^2\})$, {\em even under the small-error criterion}.\footnote{Note that the number of items $N$ in the standard group testing corresponds to ${n \choose 2} = \Theta(n^2)$ in the graph learning problem with $n$ nodes, since {\em pairs of nodes} (i.e., potential edges) play the role of items.  See Appendix \ref{sec:differences} for a brief description of the group testing problem.}

\subsection{Contributions}

The $\Omega(\min\{k^2 \log n,n^2\})$ hardness result given in \cite{Aba18} holds with respect to worst-case graphs containing $k$ edges, which raises the question of whether some notion of {\em average-case} or further restricted graph classes can overcome this inherent difficulty.  We focus primarily on the average case with respect to the ubiquitous Erd\H{o}s-R\'enyi random graph model\footnote{More precisely, we consider the variant introduced by Gilbert \cite{Gil59}.} 
and the small-error criterion, showing that indeed the number of tests required reduces to $O(\kbar \log n)$ for graphs with an average of $\kbar$ edges, and providing fairly tight explicit constant factors.  In Appendix \ref{sec:gen_d_k}, we describe how to attain similar results for general graphs with at most $k$ edges and maximum degree $d = o(\sqrt{k})$, albeit with slightly worse constant factors.

In more detail, we show the following for Erd\H{o}s-Renyi random graphs:
\begin{itemize}[leftmargin=5ex,itemsep=0ex,topsep=0.25ex]
    \item We provide a simple algorithm-independent lower bound based on counting the number of graphs within a high-probability set;
    \item We extend the COMP, DD, and SSS decoding algorithms \cite{Cha11,Ald14a} from standard group testing to the graph learning problem, and provide upper and lower bounds on their asymptotic performance under a natural random test design.
    \item We propose a sublinear-time decoding algorithm (and its associated test design) based on the GROTESQUE algorithm \cite{Cai13}, and show that it succeeds with high probability with $O(\kbar \log^2 \kbar + \kbar \log n)$ decoding time, thus nearly matching an $\Omega(\kbar \log \frac{n^2}{\kbar})$ lower bound.
\end{itemize}

Briefly, the above-mentioned decoding algorithms are described as follows: COMP ({\em cf.}, Section \ref{sec:comp}) assumes all pairs are edges unless their nodes are both in some negative test, DD ({\em cf.}, Section \ref{sec:dd}) uses the COMP solution to identify ``possible edges'' and then declares a pair to be an edge only if it is the unique possible edge among the nodes in some test, and SSS ({\em cf.}, Section \ref{sec:sss}) solves an integer program to find the sparsest graph consistent with the test outcomes.

While the group testing algorithms themselves extend easily to our setting, their theoretical analyses require significant additional effort (see Appendix \ref{sec:differences} for further discussion).  For instance, for group testing, the analysis is symmetric with respect to any defective set of size $k$, whereas for graph learning, different graphs with a fixed number of edges can behave very differently, and even seemingly simple tasks (e.g.,  determining the probability of a positive test) become challenging.

\begin{figure}
% \begin{figure}
     \begin{centering}
        %\includegraphics[width=0.55\columnwidth]{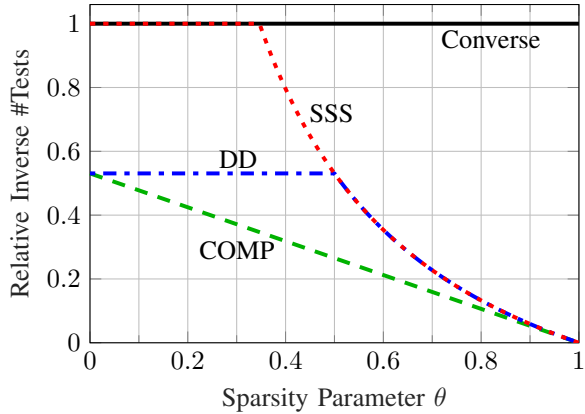}
        %\resizebox{0.55\columnwidth}{!}{%
		\input{figs/Fig_1}
	%}
        
    %     \par
     \end{centering}
    
    \caption{Asymptotic values of $\frac{\kbar \log_2(1/q)}{\text{\#Tests}}$ for recovering Erd\H{o}s-R\'enyi random graphs with edge probability $q = \Theta(n^{2(\theta-1)})$, and average number of edges $\kbar = q{n \choose 2}$.  The ``COMP'' and ``DD'' curves are achievability bounds, whereas the ``Converse'' and ``SSS'' curves are converse bounds for arbitrary test designs and i.i.d.~random test designs, respectively). \label{fig:rates}}
% \end{figure}
\end{figure}

With the exception of the sublinear-time decoding part, our results are summarized in Figure \ref{fig:rates}, where we plot the asymptotic ratio between the information-theoretic bound and the number of tests for each algorithm, for sparsity levels $\theta \in (0,1)$ such that $\kbar = \Theta(n^{2\theta})$.  We observe the following:
\begin{itemize}[leftmargin=5ex,itemsep=0ex,topsep=0.25ex]
    \item For $\theta > \frac{1}{2}$, the DD upper bound and SSS lower bound match under i.i.d.~random testing.  As we explain in Section \ref{sec:sss}, SSS is the optimal algorithm, so if it fails then so does any algorithm.  Hence, DD is asymptotically optimal (including constants) under i.i.d.~random testing for $\theta > \frac{1}{2}$.
    \item For $\theta \le \frac{1}{2}$, DD succeeds with fewer than twice as many tests as the optimal information-theoretic threshold; the latter is a converse bound applying to {\em any} test design (not only i.i.d.~random testing).
\end{itemize}

While analogous results have been established for standard group testing \cite{Ald14a}, we again highlight that the analysis comes with several non-trivial challenges, particularly when it comes to DD and SSS.  See Appendix \ref{sec:differences} for an outline of the main differences.

\section{Setup} \label{sec:setup}

We seek to learn an unknown undirected graph $G = (V,E)$ with $n$ nodes, i.e., the vertex set is $V = \{1,\dotsc,n\}$, and the edge set $E$ contains up to ${n \choose 2}$ pairs of nodes.  We adopt a random graph model in which each edge appears in the graph independently with probability $q$ (i.e., the Erd\H{o}s-R\'enyi graph $\ER(n,q)$).  After the graph $G$ is randomly drawn, it is fixed throughout the entire testing process (described below).

We test the nodes in groups; the output of each test takes the form
\begin{equation}
Y = \bigvee_{(i,j) \in E} \big\{ X_{i} \cap X_{j} \big\}, \label{eq:gt_noiseless_model}
\end{equation}
where the binary-valued test vector $X =(X_1,\dotsc,X_n)$ indicates which nodes are included in the test.  That is, the resulting output $Y = 1$ if and only if at least one edge exists in the sub-graph of $G$ induced by the nodes included in the test; we henceforth use the terminology that such an edge is {\em covered}. We refer to tests with $Y=1$ as positive, and tests with $Y=0$ as negative.  A total of $t$ tests are performed according to the test vectors $X^{(1)},\dotsc,X^{(t)}$ to produce the outcomes $Y^{(1)},\dotsc,Y^{(t)}$.  We focus on {\em non-adaptive} tests, where $X^{(1)},\dotsc,X^{(t)}$ must be selected prior to observing any outcomes.

Given the tests and their outcomes, a \emph{decoder} forms an estimate $\Ghat$ of the graph $G$, or equivalently, an estimate $\Ehat$ of the edge set $E$.  One wishes to design a sequence of tests $X^{(1)},\dotsc,X^{(t)}$, with $t$ ideally as small as possible, such that the decoder recovers $G$ with probability arbitrarily close to one.  The error probability is given by 
\begin{equation}
    \pe := \PP[\Ghat \ne G], \label{eq:pe}
\end{equation}
and is taken over the randomness of the graph $G$, as well as the tests $X^{(1)},\dotsc,X^{(t)}$ (if randomized).  We only consider deterministic decoding algorithms (without loss of optimality), and all of our results are asymptotic in the limit as $n \to \infty$ (with $q$ varying as a function of $n$).

\subsection{Sparsity Level}

We focus our attention on sparse graphs, i.e., $q = o(1)$ as $n \to \infty$.\footnote{In fact, the arguments given in \cite{Ald18} can be applied to the present setting to show that one essentially cannot improve on individual testing of edges when $q = \Theta(1)$.}  More specifically, we consider the sublinear scaling regime $q = \Theta(n^{-2(1-\theta)})$ for some $\theta \in (0,1)$, meaning that the average number of edges $\kbar = {n \choose 2}q$ behaves as $\Theta( n^{2\theta} )$.  By the assumption $\theta \in (0,1)$, we also have
\begin{equation}
    n^{-(2-\eta)} \ll q \ll n^{-\eta}, \quad n^{\eta} \ll k \ll n^{2-\eta} \label{eq:q_bounds}
\end{equation}
for sufficiently small (but constant) $\eta > 0$ and sufficiently large $n$.  Here and subsequently, we write $f(n) \ll g(n)$ as a shorthand for $f(n) = o(g(n))$.

\subsection{Bernoulli Random Testing}

For the most part, we will focus on the case that the tests are designed randomly: Each node is independently placed in each test with a given probability $p$.  We refer to this as {\em i.i.d.~Bernoulli testing}, or simply {\em Bernoulli testing} for short.  Analogous designs are known to lead to most of the best-known performance bounds in the group testing literature \cite{Ald14a,Sca15b}, with the exception of some slight improvements shown recently via more structured random designs \cite{Joh16,Coj19}.

We parametrize $p$ as $p = \sqrt{\frac{2\nu}{qn^2}}$ for some constant $\nu > 0$, as this scaling regime turns out to be optimal in all cases (with the choice $\nu = 1$ further being optimal for the algorithms we consider).  Note that this choice of $p$ gives $p^2 = \frac{\nu}{\kbar} (1+o(1))$, since $\kbar = \frac{1}{2}qn^2 (1+o(1))$.

When studying probabilities associated with a single random Bernoulli test, we will denote the test outcome by $Y$, and the (random) indices of nodes included in the test by $\Lc \subseteq \{1,\dotsc,n\}$.  In addition, $\PP_G[\cdot]$ denotes probability (with respect to the random testing alone) when the underlying graph is $G$.

\subsection{Typical Graphs} \label{sec:typical}

Throughout the paper, we frequently make use of the following {\em typical set} of graphs:
\begin{align}
    \Tc(\epsilon_n) = \Big\{ G \,:\, &(1-\epsilon_n)\kbar \le k \le (1+\epsilon_n)\kbar, d \le \dmax, \nonumber \\
        & (1-\epsilon_n)(1-e^{-\nu}) \le \PP_G[Y=1] \le (1+\epsilon_n)(1-e^{-\nu})  \Big\}, \label{eq:typical}
\end{align}
where $k = |E|$ is the number of edges, $d$ is the maximum degree of $G$, and
\begin{equation}
    \dmax = 
    \begin{cases}
    2nq & \theta > \frac{1}{2} \\
    \log n & \theta \le \frac{1}{2}.
    \end{cases} \label{eq:d_bound}
\end{equation}
The following lemma justifies the terminology {\em typical set} by showing that the random graph lies in this set with probability approaching one.

\begin{lemma} \label{lem:typical}
    Fix $\theta \in (0,1)$, and let $G \sim \ER(n,q)$ for some $q = \Theta(n^{-2(1-\theta)})$.  In addition, suppose that $\PP_G[Y=1]$ in \eqref{eq:typical} is defined with respect to $\Bernoulli(p)$ testing with $p = \sqrt{\frac{2\nu}{qn^2}}$ for fixed $\nu > 0$.  Then, there exists a sequence $\epsilon_n \to 0$ such that $\PP[G \in \Tc(\epsilon_n)] \to 1$ as $n \to \infty$.
\end{lemma}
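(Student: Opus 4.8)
The plan is to verify the three defining conditions of $\Tc(\epsilon_n)$ separately, showing that each holds with probability tending to one (or, for the third condition, deterministically once the first two hold). The bounds on the number of edges $k$ and the maximum degree $d$ are routine concentration statements, while the concentration of $\PP_G[Y=1]$ around $1-e^{-\nu}$ is the genuine difficulty and is where both the Erd\H{o}s--R\'enyi structure and the degree bound are used.

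For the edge count, note that $k = |E| \sim \Binomial(\binom{n}{2}, q)$ with mean $\kbar = \binom{n}{2}q = \Theta(n^{2\theta}) \to \infty$. A Chernoff bound gives $\PP[|k-\kbar| > \epsilon_n \kbar] \le 2\exp(-c\,\epsilon_n^2 \kbar)$, so choosing $\epsilon_n \to 0$ slowly enough that $\epsilon_n^2 \kbar \to \infty$ (e.g.\ $\epsilon_n = \kbar^{-1/4}$) makes this vanish. For the maximum degree, each $D_i \sim \Binomial(n-1,q)$ has mean $\approx nq = \Theta(n^{2\theta-1})$. When $\theta > \frac12$ this mean grows polynomially, and a multiplicative Chernoff bound gives $\PP[D_i > 2nq] \le e^{-\Omega(nq)}$; a union bound over the $n$ nodes still tends to zero since $nq$ dominates $\log n$. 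When $\theta \le \frac12$ the mean is $O(1)$, and the tail $\PP[D_i \ge \log n] \le (e\,nq/\log n)^{\log n}$ beats the union-bound factor $n$ because $(\log n)^{\log n}$ grows super-polynomially. Hence $d \le \dmax$ with probability $\to 1$ in both regimes.

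The core of the argument is the third condition. Fix any graph $G$ satisfying the first two conditions and consider the number of covered edges $N = \sum_{(i,j)\in E} X_i X_j$ under $\Bernoulli(p)$ testing, so that $\PP_G[Y=1] = \PP[N \ge 1]$ and $\PP_G[Y=0] = \PP[N=0]$. (Averaging also over $G$ and conditioning on the test size $\ell = |\Lc| \sim \Binomial(n,p)$ gives $\EE_G[\PP_G[Y=0]] = \EE[(1-q)^{\binom{\ell}{2}}] \to e^{-\nu}$, since $\ell \approx np$ and $q\binom{\ell}{2}\approx \nu$; this identifies the target constant.) My plan is to show that the factorial moments of $N$ over the test randomness satisfy $\EE[(N)_r] \to \nu^r$ for every fixed $r$, i.e.\ that $N$ is asymptotically $\Poisson(\nu)$, whence $\PP[N=0] \to e^{-\nu}$. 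The $r$-th factorial moment is a sum over ordered $r$-tuples of distinct edges of the probability that all are covered, namely $p^{(\text{number of vertices spanned})}$. The dominant contribution comes from vertex-disjoint tuples: there are $(k)_r = k^r(1+o(1))$ of them, each contributing $p^{2r}$, for a total $(kp^2)^r = \nu^r(1+o(1))$, using $k = \kbar(1+o(1))$ and $p^2 = \nu/\kbar\,(1+o(1))$. Every tuple in which two edges share a vertex spans fewer vertices (raising the per-tuple probability by $p^{-1}$ per coincidence) but is far rarer, since the number of such configurations is smaller by $O(d/k)$ per coincidence. As each coincidence thus changes the contribution by $O(d/(kp)) = O(d/\sqrt{\nu k}) = o(1)$ — this is where the degree bound enters, since $\dmax = o(\sqrt{\kbar})$ in both regimes ($n^{2\theta-1} = o(n^\theta)$ for $\theta>\frac12$, and $\log n = o(n^\theta)$ for $\theta \le \frac12$) — all non-disjoint contributions are negligible, giving $\EE[(N)_r] = \nu^r + o(1)$ uniformly over graphs satisfying the first two conditions.

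To convert these estimates into a quantitative bound on $\PP[N=0]$, I would invoke the Bonferroni (truncated inclusion--exclusion) inequalities $\sum_{r=0}^{2m+1}\frac{(-1)^r}{r!}\EE[(N)_r] \le \PP[N=0] \le \sum_{r=0}^{2m}\frac{(-1)^r}{r!}\EE[(N)_r]$: substituting $\EE[(N)_r] = \nu^r + o(1)$ into the finite sums and then letting the truncation level $m$ grow shows $\PP[N=0] \to e^{-\nu}$ uniformly over all $G$ obeying conditions one and two, which defines a tolerance $\epsilon_n^{(3)} \to 0$ for condition three. Taking $\epsilon_n$ to dominate both the edge-count tolerance and this deterministic error then yields $\PP[G \in \Tc(\epsilon_n)] \ge \PP[\text{conditions one and two hold}] \to 1$. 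The main obstacle is precisely the control of the error terms in the factorial moments: unlike standard group testing, the covered-edge indicators $X_iX_j$ are correlated through shared vertices, so the Poisson approximation is valid only because the degree bound forces these correlations to be weak; making the per-coincidence loss $o(1)$ uniform over all typical $G$, and organizing the combinatorics of the general $r$-tuple intersection patterns, is the delicate part.
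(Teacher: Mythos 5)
Your treatment of the edge-count and maximum-degree conditions is fine and matches the paper's (Chernoff bounds plus union bounds). For the crucial third condition, however, the specific counting claim you rely on --- that the number of edge tuples is ``smaller by $O(d/k)$ per coincidence'' --- is false once the configuration contains a cycle, and this is exactly the combinatorial point you deferred as ``delicate.'' Concretely, take $r=3$ and a triangle: it has $c = 2\cdot 3 - 3 = 3$ coincidences, so your heuristic predicts $O(k^3 (d/k)^3) = O(d^3)$ such configurations, but a graph with $k$ edges and maximum degree $d$ can contain $\Theta(kd)$ triangles (e.g., a disjoint union of $(d+1)$-cliques), and $kd \gg d^3$ precisely in the regime $d = o(\sqrt{k})$. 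The repair is to count by connected components rather than by coincidences: a connected component with $e$ edges and $v$ vertices can be built edge-by-edge (first edge: at most $k$ choices; each edge adding a new vertex: $O(d)$ choices; each edge internal to the current span: $O(1)$ choices), so there are at most $O(k d^{v-2})$ such components, each contributing $p^{v}$, for a total of $O(k d^{v-2} p^{v}) = O(\nu (dp)^{v-2})$ per component. Any component with at least two edges has $v \ge 3$, and $dp = \Theta(d/\sqrt{\kbar}\,) = o(1)$ exactly under the bound $d \le \dmax = o(\sqrt{\kbar}\,)$, so every non-disjoint configuration contributes $o(1)$ and your factorial-moment/Bonferroni scheme then goes through. (For the triangle this gives a contribution of order $\nu\, dp$, not $\nu (d/(kp))^3$ as your heuristic asserts; both vanish, but your exponent is wrong, so the lemma you would actually need is different from the one you stated.)

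With this repair, your route is genuinely different from the paper's, and in fact stronger. The paper controls the non-disjoint configurations probabilistically: it defines the subgraph counts $U_{j,\ell}(G)$ (number of $j$-edge sets spanning $\ell < 2j$ nodes), bounds their expectations over $G \sim \ER(n,q)$, and applies Markov's inequality to obtain a high-probability counting event, which is then inserted into the Bonferroni expansion; the degree bound enters only through $d = o(k)$ and $d = o(\sqrt{k})$ in secondary steps. Your argument, once corrected, uses only the deterministic facts $k = \kbar(1+o(1))$ and $d = o(\sqrt{\kbar}\,)$, and therefore shows that $\PP_G[Y=1] = (1-e^{-\nu})(1+o(1))$ holds for \emph{every} such graph, not merely with high probability over the Erd\H{o}s--R\'enyi ensemble. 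This is a strictly stronger conclusion than the paper's: its appendix on general edge- and degree-bounded graphs explicitly states that it is unclear whether this sharp characterization can be deduced from $d = o(\sqrt{k})$ alone, and settles there for the weaker two-sided bounds $\nu/(1+\nu) \le \PP_G[Y=1] \le \nu$; the component-counting argument above closes that gap.
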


The condition on $k$ in the typical set simply states that the number of edges is close to its mean, which follows by standard concentration bounds.  The bound on the maximum degree is similarly standard and straightforward to establish.  By far the most challenging part is bounding $\PP_G[Y=1]$ with high probability; this is done using the inclusion-exclusion principle (i.e., Bonferroni's inequalities) and carefully bounding the probability of a random test containing one edge, two edges, three edges, and so on.  The details are given in Appendix \ref{sec:pf_typical}.

Using the bounds on $k$ and $d$ in \eqref{eq:typical}, along with the fact that $q$ satisfies \eqref{eq:q_bounds}, we readily observe that
\begin{equation}
    d^2 \ll k, \quad dp \ll 1 \label{eq:d_vs_k}
\end{equation}
in both cases of \eqref{eq:d_bound}.  Note that the second of these statements follows immediately from the first since we focus on the regime $p = \Theta\big( \frac{1}{\sqrt k}\big)$.

\section{Algorithm-Independent Converse Bound} \label{sec:conv}

To provide a benchmark for our upper bounds, we first provide a simple algorithm-independent lower bound on the number of tests for attaining asymptotically vanishing error probability, which is based on fairly standard counting arguments and Fano's inequality \cite[Sec.~2.10]{Cov01}.

\begin{theorem} \label{thm:conv}
    Under the setup of Section \ref{sec:setup} with $q = o(1)$ and an arbitrary non-adaptive test design, in order to achieve $\pe \to 0$ as $t \to \infty$, it is necessary that
    \begin{equation}
        t \ge \bigg( \kbar \log_2 \frac{1}{q} \bigg) (1-\eta) \label{eq:t_conv} 
    \end{equation}
    for arbitrarily small $\eta > 0$.
\end{theorem}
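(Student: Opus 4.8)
The plan is to prove the bound by a counting argument that controls how many graphs a deterministic decoder can possibly recover correctly, combined with the typical-set guarantee of Lemma~\ref{lem:typical}; this is really the maximal-probability (strong-converse) incarnation of the Fano-type reasoning referred to above. First I would fix an arbitrary non-adaptive deterministic test design, so that the outcome vector $Y^t = (Y^{(1)},\dotsc,Y^{(t)})$ is a deterministic function of $G$ and the decoder is a map $\Ghat(\cdot)$ from $\{0,1\}^t$ to graphs. The key observation is that on the collection of graphs decoded correctly, the map $G \mapsto Y^t(G)$ is injective: two graphs producing identical outcomes receive the same estimate, so at most one of them is correct. Since $Y^t$ takes at most $2^t$ values, at most $2^t$ graphs can be recovered correctly. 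For a randomized design this bound holds for every realization of the test vectors (which determine $Y^t$ from $G$), hence after averaging as well.

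Next I would restrict attention to the typical set $\Tc(\epsilon_n)$ of Lemma~\ref{lem:typical}, whose role here is precisely to cap the number of candidate graphs. Writing $\mathcal{D}$ for the set of correctly decoded graphs, I would split the success probability as
\[
 1-\pe \;\le\; |\mathcal{D}\cap\Tc(\epsilon_n)|\cdot \max_{G\in\Tc(\epsilon_n)}\PP[G] \;+\; \PP\big[G\notin\Tc(\epsilon_n)\big],
\]
and then invoke $|\mathcal{D}\cap\Tc(\epsilon_n)| \le |\mathcal{D}| \le 2^t$ together with $\PP[G\notin\Tc(\epsilon_n)]\to 0$ from Lemma~\ref{lem:typical}. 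The remaining ingredient is the maximal probability: since $\PP[G] = q^{|E|}(1-q)^{\binom{n}{2}-|E|}$ is decreasing in $|E|$ for $q<\tfrac12$, and every $G\in\Tc(\epsilon_n)$ has $|E|\ge(1-\epsilon_n)\kbar$, I would bound $\max_{G\in\Tc(\epsilon_n)}\PP[G] \le q^{(1-\epsilon_n)\kbar}(1-q)^{\binom{n}{2}-(1-\epsilon_n)\kbar}$.

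Taking base-$2$ logarithms then gives $\log_2\max_{G\in\Tc(\epsilon_n)}\PP[G] \le -(1-\epsilon_n)\kbar\log_2\tfrac1q - \big(\binom{n}{2}-(1-\epsilon_n)\kbar\big)\log_2\tfrac{1}{1-q}$, where the last term is nonnegative (indeed of order $\kbar\log_2 e$, only strengthening the bound). Rearranging the displayed inequality and using that $\pe\to 0$ and $\PP[G\notin\Tc(\epsilon_n)]\to 0$ force $1-\pe-\PP[G\notin\Tc(\epsilon_n)]\to 1$, I obtain $t \ge (1-\epsilon_n)\kbar\log_2\tfrac1q - o(1)$. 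Since $\epsilon_n\to 0$ and, crucially, $q=o(1)$ makes $\log_2\tfrac1q\to\infty$ so that the leading term dominates all lower-order corrections, this yields $t\ge(1-\eta)\kbar\log_2\tfrac1q$ for any fixed $\eta>0$ and all large $n$, as claimed.

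The step I expect to be the main obstacle is exactly this alphabet-size control, i.e.\ the reason a bare Fano inequality does not suffice: applied to the full space of $2^{\binom{n}{2}}$ graphs it produces a penalty term of order $\pe\cdot\binom{n}{2}$, and knowing only $\pe\to 0$ is too weak to guarantee $\pe\binom{n}{2} = o\big(\kbar\log_2\tfrac1q\big)$. Restricting to $\Tc(\epsilon_n)$ via Lemma~\ref{lem:typical}, which pins $|E|$ within a $(1\pm\epsilon_n)$ factor of $\kbar$, is what makes the counting tight; phrasing the argument through the maximal probability $\max_{G\in\Tc(\epsilon_n)}\PP[G]$ rather than a conditional entropy also cleanly sidesteps the analogous blow-up that an entropy-based version would suffer from a $\PP[G\notin\Tc(\epsilon_n)]\cdot\binom{n}{2}$ term.
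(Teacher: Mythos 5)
Your proposal is correct, but it takes a genuinely different route from the paper. The paper's proof applies a conditional form of Fano's inequality (conditioned on the event that the edge count is within $(1\pm\epsilon)\kbar$), combined with the data processing inequality $I(G;\Ghat) \le I(G;\Yv) \le t\log 2$ and standard typical-set entropy estimates; yours replaces the information-theoretic machinery with an elementary counting argument: a deterministic decoder on a deterministic non-adaptive design recovers at most $2^t$ graphs (injectivity of $G \mapsto Y^t(G)$ on the correctly decoded set), and every typical graph has probability at most $q^{(1-\epsilon_n)\kbar}(1-q)^{\binom{n}{2}-(1-\epsilon_n)\kbar}$, so the success probability is at most $2^t q^{(1-\epsilon_n)\kbar} + o(1)$. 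Both arguments hinge on the same typicality fact — concentration of the edge count around $\kbar$ — though you invoke all of Lemma \ref{lem:typical} when only its first condition is needed (and is available by a direct Chernoff bound; the $\PP_G[Y=1]$ condition is tied to Bernoulli testing and is irrelevant for arbitrary designs, so citing only the edge-count concentration would be cleaner). What your approach buys: it directly yields the \emph{strong} converse, since for $t \le (1-\eta)\kbar\log_2\frac{1}{q}$ your bound gives $1-\pe \le q^{(\eta-\epsilon_n)\kbar} + o(1) \to 0$, i.e.\ $\pe \to 1$ — a strengthening the paper only mentions as obtainable via a separate analysis in the remark following the theorem. What the paper's approach buys, as that same remark notes, is that the Fano-based argument extends more readily to noisy observation models, where your injectivity/counting step breaks down because the outcomes are no longer a deterministic function of $G$. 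Your diagnosis of why a bare, unconditioned Fano inequality fails (the $\pe \cdot \binom{n}{2}$ penalty term is not controlled by $\pe \to 0$ alone) is also accurate, and is precisely the issue the paper's conditioning resolves.
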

\begin{proof}
    The proof is based on the fact that the prior uncertainty (entropy) is roughly ${n \choose 2}q\log\frac{1}{q} = \kbar \log_2 \frac{1}{q}$ bits, whereas each test only reveals one bit of information.
    See Appendix \ref{app:pf_converse} for details.
\end{proof}

% \begin{remark}
    Using a similar analysis to \cite{Bal13}, the preceding result can easily be strengthened to the {\em strong converse}, stating that $\pe$ is not only bounded away from zero when $t$ is below the threshold given, but tends to one.  On the other hand, the proof based on Fano's inequality extends more easily to noisy settings.  Extending the result to adaptive test designs (e.g., again see \cite{Bal13}) is also straightforward, but in this paper we focus exclusively on non-adaptive designs.
% \end{remark}

\section{Algorithmic Upper Bounds} \label{sec:algs}

\subsection{COMP Algorithm} \label{sec:comp}

Adopting the terminology from the group testing literature, the COMP algorithm is described in Algorithm \ref{alg:comp}.  The simple idea is that if two nodes appear in a negative test, then the corresponding edge must be absent from $G$.  Hence, all such edges are ruled out, and the remaining edges are declared to be present.  Once Lemma \ref{lem:typical} is in place, the theoretical analysis of COMP becomes very simple and similar to standard group testing \cite{Cha11}, leading to the following.

\begin{algorithm}[!t]
    \caption{Combinatorial Orthogonal Matching Pursuit (COMP)} \label{alg:comp}
    \begin{algorithmic}[1]
        \Require Test designs $\{\Lc^{(i)}\}_{i=1}^t$, outcomes $\Yv=(Y^{(1)},\dotsc,Y^{(t)})$
        \State Initialize $\Ehat$ to contain all ${n \choose 2}$ edges
        \For {{\bf each} $i$ such that $Y^{(i)} = 0$} 
            \State Remove all edges from $\Ehat$ whose nodes are both in $\Lc^{(i)}$
        \EndFor
        \State \Return $\Ghat = (V,\Ehat)$
    \end{algorithmic}
\end{algorithm}

\begin{theorem} \label{thm:comp}
    Under the setup of Section \ref{sec:setup} with $q = \Theta(n^{2(\theta-1)})$ for some $\theta \in (0,1)$, and Bernoulli testing with parameter $\nu = 1$, the COMP algorithm achieves $\pe \to 0$ as long as
    \begin{equation}
        t \ge \big( 2e \cdot \kbar \log n \big) (1+\eta) \label{eq:comp}
    \end{equation}
    for arbitrarily small $\eta > 0$.
\end{theorem}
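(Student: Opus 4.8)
The plan is to exploit the one-sided nature of COMP. First I would observe that COMP never discards a true edge: if $(i,j) \in E$ and both endpoints lie in a test set $\Lc$, that test is necessarily positive, so the edge is never removed on a negative test. Hence $E \subseteq \Ehat$ always, and the only way COMP can fail is if some non-edge $(i,j) \notin E$ \emph{survives}, meaning it is never ruled out by a negative test containing both $i$ and $j$. I would then split the analysis using Lemma \ref{lem:typical}: writing $\pe \le \PP[G \notin \Tc(\epsilon_n)] + \sup_{G \in \Tc(\epsilon_n)} \PP_G[\text{COMP fails}]$, the first term vanishes by the lemma, and it remains to bound the failure probability uniformly over typical graphs, where I may use the bounds on $k$, on the maximum degree, and on $\PP_G[Y=1]$.

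Next I would lower-bound the per-test probability that a fixed non-edge $(i,j)$ is ruled out, namely $r := \PP[i,j \in \Lc,\, Y = 0]$. Conditioning on $\{i,j\in\Lc\}$ (probability $p^2$), the test is negative precisely when no neighbor of $i$ or $j$ lies in $\Lc$ and no edge of the induced subgraph $G[S]$ on the remaining nodes $S = V \setminus (\{i,j\} \cup N(i) \cup N(j))$ is covered. Because node placements are independent and these node sets are disjoint, this factors exactly as $r = p^2\,(1-p)^{|N(i)\cup N(j)|}\,\PP_{G[S]}[Y=0]$. On the typical set $|N(i)\cup N(j)| \le 2\dmax$ with $\dmax\, p \ll 1$ by \eqref{eq:d_vs_k}, so $(1-p)^{|N(i)\cup N(j)|} = 1 - o(1)$; and since $G[S]$ is an induced subgraph of $G$, any realization that is negative for $G$ is negative for $G[S]$, giving $\PP_{G[S]}[Y=0] \ge \PP_G[Y=0]$. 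Combining with the typical-set bound $\PP_G[Y=0] \ge (1-o(1))e^{-\nu}$ and the identity $p^2 = \frac{\nu}{\kbar}(1+o(1))$, I obtain the uniform lower bound $r \ge \frac{e^{-\nu}}{\kbar}(1-o(1))$, which at $\nu = 1$ reads $r \ge \frac{e^{-1}}{\kbar}(1-o(1))$.

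Finally I would take a union bound over the at most $\binom{n}{2} \le n^2/2$ non-edges. Since each survives all $t$ tests with probability $(1-r)^t \le e^{-rt}$, the failure probability is at most $\frac{n^2}{2}e^{-rt}$. Plugging in $t = (2e\,\kbar\log n)(1+\eta)$ together with the bound on $r$ yields $rt \ge 2\log n\,(1+\eta)(1-o(1))$, whence $\frac{n^2}{2}e^{-rt} \le \frac{1}{2}\,n^{-2\eta + o(1)} \to 0$, which combined with the vanishing first term completes the argument.

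I anticipate the main obstacle to be the per-test lower bound on $r$: one must justify the factorization after conditioning on $\{i,j\in\Lc\}$, control the neighbor-exclusion factor uniformly over all non-edges (this is exactly where $d \le \dmax$ and $\dmax\, p \ll 1$ enter), and cleanly compare $\PP_{G[S]}[Y=0]$ with $\PP_G[Y=0]$. Everything else is a routine union bound; as the authors note, the real content has been absorbed into Lemma \ref{lem:typical}, whose high-probability control of $\PP_G[Y=1]$ supplies the constant $e^{-\nu}$ that drives the leading factor $2e$.
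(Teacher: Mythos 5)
Your proof is correct and follows essentially the same route as the paper's: condition on typicality (Lemma \ref{lem:typical}), show that each fixed non-edge is ruled out by a single test with probability $p^2 e^{-\nu}(1-o(1))$ using the degree bound $dp = o(1)$ and the typical-set control of $\PP_G[Y=1]$, then union bound over the at most $n^2$ non-edges. The only cosmetic difference is that you lower-bound the per-test ruling-out probability exactly, via factorization over disjoint node sets and the monotonicity $\PP_{G[S]}[Y=0] \ge \PP_G[Y=0]$, whereas the paper upper-bounds $\PP[Y=1 \mid \{i,j\}\subseteq\Lc]$ by $2dp + \PP_G[Y=1]$ through a union bound and an independence argument --- both yield the same asymptotics and the same constant $2e$.
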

\begin{proof}
    The graph properties given in the definition \eqref{eq:typical} of $\Tc(\epsilon_n)$ facilitate a direct analysis of the probability that the two nodes of a given non-edge fail to be included together in any negative test, and a union bound over all non-edges establishes the claim.  See Appendix \ref{app:pf_comp} for details.
\end{proof}

\subsection{DD Algorithm} \label{sec:dd}

Since we work on the assumption that edges are rare (i.e., $q \ll 1$), one would expect that COMP's approach of assuming edges are present (unless immediately proven otherwise) can be highly suboptimal.  The DD algorithm,\footnote{For the graph learning problem, one may prefer to name the algorithm {\em Definite Edges}, but we prefer to maintain consistency with the group testing literature \cite{Ald14a}.} described in Algorithm \ref{alg:dd}, overcomes this limitation by assuming edges are absent unless immediately proven otherwise.  The way to prove the presence of the edge is to use COMP to rule out non-edges, mark the remaining pairs as {\em possible edges} ($\PE$), and then look for positive tests containing only a single pair from $\PE$.  The analysis of DD is a fair bit more challenging than COMP, but gives an improved bound, as stated in the following.

\begin{algorithm}[!t]
    \caption{Definite Defectives (DD)} \label{alg:dd}
    \begin{algorithmic}[1]
        \Require Test designs $\{\Lc^{(i)}\}_{i=1}^t$, outcomes $\Yv=(Y^{(1)},\dotsc,Y^{(t)})$
        \State Initialize $\Ehat = \emptyset$, and initialize $\PE$ to contain all ${n \choose 2}$ edges
        \For {{\bf each} $i$ such that $Y^{(i)} = 0$} 
            \State Remove all edges from $\PE$ whose nodes are both in $\Lc^{(i)}$
        \EndFor
        \For {{\bf each} $i$ such that $Y^{(i)} =  1$} 
            \State If the nodes from $\Lc^{(i)}$ cover exactly one edge in $\PE$, add that edge to $\Ehat$
        \EndFor
        \State \Return $\Ghat = (V,\Ehat)$
    \end{algorithmic}
\end{algorithm}

\begin{theorem} \label{thm:dd}
    Under the setup of Section \ref{sec:setup} with $q = \Theta(n^{2(\theta-1)})$ for some $\theta \in (0,1)$, and Bernoulli testing with parameter $\nu = 1$, the DD algorithm achieves $\pe \to 0$ as long as
    \begin{equation}
        t \ge \big( 2 \max\{\theta, 1-\theta\}e \cdot \kbar \log n \big) (1+\eta) \label{eq:dd}
    \end{equation}
    for arbitrarily small $\eta > 0$.
\end{theorem}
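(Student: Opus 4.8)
The plan is to reduce everything to the \emph{typical set} and then show that, for DD, the only failure mode is a missed (unidentified) edge. First I would invoke Lemma~\ref{lem:typical} to restrict attention to graphs $G\in\Tc(\epsilon_n)$, so that it suffices to bound $\PP_G[\mathrm{DD\ fails}]$ uniformly over typical $G$ (with $\nu=1$, so $p^2 = \frac1\kbar(1+o(1))$ and $\PP_G[Y=0]=e^{-1}(1+o(1))$). The crucial structural observation is that DD \emph{never outputs a non-edge}: every true edge survives the COMP phase (a test covering both its endpoints is positive, hence never removes it from $\PE$), so any positive test covers at least one true edge lying in $\PE$; thus a positive test covering exactly one element of $\PE$ must cover that true edge. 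Consequently the error event is contained in $\bigcup_{e\in E}\{e\text{ not identified}\}$, and since $|E|\le(1+\epsilon_n)\kbar$ on the typical set, a union bound reduces the theorem to showing $\sum_{e\in E}\PP_G[e\text{ not identified}]\to 0$.

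Next I would split the masking of a fixed edge $e=(a,b)$ into two sources. Call a test \emph{isolating} for $e$ if it covers $e$ but no other true edge. Using the typical-set bounds, a test covering $e$ avoids all other true edges with probability $\to e^{-\nu}$: the at most $2d$ incident edges contribute $(1-p)^{2d}\to1$ since $dp\ll1$ by \eqref{eq:d_vs_k}, while the $\le k$ disjoint edges contribute $(1-p^2)^{k}\to e^{-\nu}$ since $kp^2\to\nu$. Hence the number $V$ of isolating tests is Binomial with mean $\lambda:=tp^2e^{-\nu}(1+o(1))$, and $\PP[V=0]\le e^{-\lambda}$. With $t\ge(1+\eta)\,2\max\{\theta,1-\theta\}e\,\kbar\log n$ and $\nu=1$ one gets $\lambda\ge(1+\eta)2\max\{\theta,1-\theta\}\log n$, so $e^{-\lambda}\le n^{-(1+\eta)2\max\{\theta,1-\theta\}}$; summing this ``no isolating test'' contribution over the $\Theta(n^{2\theta})$ edges gives a vanishing total. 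This part (Source a) is what produces the $2\theta$ term.

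For the edge to be identified we additionally need one isolating test to be free of \emph{surviving} false positives (non-edges left in $\PE$ by COMP). Here I would compute the survival probability of a non-edge $(c,d)$: a given test kills it with probability $p^2\cdot\PP_G[Y=0\mid c,d\in\Lc]=p^2 e^{-\nu}(1+o(1))$ (the conditional negativity is again $e^{-\nu}$ up to the degree corrections as above), so $(c,d)$ survives with probability $\approx e^{-\lambda}$. Crucially, survival is determined by tests \emph{other} than the candidate isolating test (which is positive and kills nothing), so the expected number of surviving false positives covered by a given test factorizes as $S:=\binom{n}{2}p^2e^{-\lambda}=\tfrac{\nu}{q}e^{-\lambda}(1+o(1))$. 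Since $\lambda\ge(1+\eta)2(1-\theta)\log n\ge(1+\eta)\log\frac1q$, we get $S\le n^{-2\eta(1-\theta)}\to0$; this vanishing governs the $2(1-\theta)$ term and reflects that at this many tests the false positives are too rare to block isolation.

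The main obstacle is turning $S\to0$ into a summable bound on $\PP_G[e\text{ not identified}]$. A Markov bound on the number of spoiled isolating tests only yields $\lambda S$, which fails to beat the $n^{2\theta}$ edges for small $\eta$; one must instead use the \emph{conjunction} that \emph{all} $\approx\lambda$ isolating tests are spoiled. Writing this as $\sum_{m\ge1}\PP[V=m]\,\PP[\text{all }m\text{ spoiled}\mid V=m]$, the heart of the argument is to show the per-test spoilage events are \emph{approximately independent} across isolating tests, so that $\PP[\text{all }m\text{ spoiled}]\lesssim (S(1+o(1)))^m$ and hence the whole contribution is $\approx e^{-\lambda}(e^{\lambda S}-1)\ll e^{-\lambda}$, dominated by Source~a. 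The graph-specific estimate enabling this is that two isolating tests share a surviving false positive only with probability $\approx\binom{n}{2}p^4 e^{-\lambda}$, which is $o(S^2)$ under the scaling $p=\Theta(\kbar^{-1/2})$ and the sparsity bounds \eqref{eq:q_bounds} (even though the two tests share \emph{many} non-surviving non-edges). I expect controlling these correlations — bounding the higher-order overlaps of surviving false positives among the isolating tests, not merely the pairwise ones — to be the genuinely delicate step, and the place where the graph structure forces work beyond the standard group-testing DD analysis.
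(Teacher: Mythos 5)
Your skeleton is sound --- restriction to typical graphs via Lemma \ref{lem:typical}, the observation that DD never outputs a non-edge, and the split of each edge's failure into (a) no \emph{isolating} test and (b) all isolating tests spoiled by surviving non-edges --- and it produces the correct constants ($2\theta$ from (a), $2(1-\theta)$ from (b)); your source (a) analysis is correct. The genuine gap is the step you flag but do not supply: the factorization $\PP[\text{all }m\text{ isolating tests spoiled}\mid V=m]\lesssim S^m$. The spoilage events are coupled through the survival randomness, which is a function of \emph{all} tests: conditioning on $V=m$, i.e.\ on $m$ specific tests being isolating (hence positive, hence killing no non-edge), tilts the law of the remaining tests and therefore of the survivor set, so a per-test bound of $S$ cannot simply be raised to the $m$-th power. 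Your supporting estimate is also fragile: the pairwise overlap bound $\binom{n}{2}p^4e^{-\lambda}=o(S^2)$ is equivalent to $e^{\lambda}\ll n^2$, i.e.\ $(1+\eta)\max\{\theta,1-\theta\}<1$, which is valid only once $\eta$ is small relative to $\min\{\theta,1-\theta\}$; and even granting it, pairwise control does not yield the factorization --- one would need all higher-order joint moments of the spoiler counts across the isolating tests, which you explicitly leave open. As written, the $2(1-\theta)$ branch of the theorem is therefore not proved.

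The paper closes exactly this hole with a conditioning device rather than a correlation expansion. Since $\PE$ is a function of the \emph{negative} tests alone, conditioning on the realized $\PE$ and on the positive/negative split leaves the contents of the positive tests exactly i.i.d.\ (each distributed as a Bernoulli test conditioned on being positive); formalized via the multinomial-conditioning result (Lemma \ref{lem:multi_cond}), this gives the \emph{exact} conditional law $(T_{i,j}\mid t_+,\PE)\sim\Binomial\big(t_+,q_{i,j}/q_+\big)$ in \eqref{eq:Tij} for the number of tests in which $(i,j)$ is the unique covered pair of $\PE$. In other words, the independence you hoped to establish approximately is exact after the right conditioning. What remains is a uniform lower bound $q_{i,j}\ge p^2e^{-\nu}(1+o(1))$, and this requires high-probability bounds on the \emph{realized} survivor counts rather than expectations: $H_0=o(k)$ survivors in total (each spoils at cost $p^2$) and, separately, $H_1=o(\sqrt{k})$ survivors sharing a node with a true edge --- these spoil an isolating test for $e$ at cost $p$ rather than $p^2$, so they need stronger relative control even though their expected number is lower order (a point your formula $S=\binom{n}{2}p^2e^{-\lambda}$ glosses over). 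The Markov bounds on $H_0,H_1$ produce the side conditions \eqref{eq:dd_t1_1}--\eqref{eq:dd_t1_2}, and the paper then verifies that \eqref{eq:dd_t1_2} is dominated by \eqref{eq:dd_t1_1} and \eqref{eq:dd_t2}, leaving the constant $2\max\{\theta,1-\theta\}$ unchanged. Patching your outline along these lines essentially turns it into the paper's proof.
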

\begin{proof}
    The proof is based on analyzing the two steps separately.  In the first step, we show that with high probability not too many non-edges are included in $\PE$, and in the second step, we show that conditioned on this success event in the first step, each true edge is the unique PE in some test with high probability.   The details are given in Appendix \ref{sec:pf_dd}, and the main differences to the standard group testing analysis \cite{Ald14a,Sca18b} are highlighted in Appendix \ref{sec:differences}.
\end{proof}

\section{SSS Algorithm Lower Bound} \label{sec:sss}

It is a standard result that under any random graph model, the optimal decoder (in the sense of minimizing $\pe = \PP[\Ghat \ne G]$) is the one that declares $\Ghat$ to be the most probable graph that would have produced the observation vector $\Yv = (Y^{(1)},\dotsc,Y^{(t)})$ if it were the true graph.  Under the Erd\H{o}s-R\'enyi graph model, graphs with fewer edges are always more likely, so this decoder simply searches for the graph with the fewest edges that is {\em satisfying} in the sense of being consistent with $\Yv$.  This leads to the SSS algorithm described in Algorithm \ref{alg:sss}.  Similarly to \cite{Mal12}, this algorithm amounts to an integer program, which may be hard to solve efficiently in general.

Despite this computational challenge, a key utility of studying SSS is as follows.  Since it is the optimal decoding algorithm, a lower bound on the number of tests it requires is also a lower bound for {\em any} decoding algorithm.  In the following theorem, we provide such a lower bound with respect to random Bernoulli test designs.  While such a lower bound is, in a sense, weaker than that of Theorem \ref{thm:conv} (because that result holds for arbitrary test designs), it leads to the important conclusion that one cannot hope to improve on the bound for DD for $\theta > \frac{1}{2}$ unless one moves beyond Bernoulli test designs.  See Figure \ref{fig:rates} for an illustration.

\begin{algorithm}[!t]
    \caption{Smallest Satisfying Set (SSS)} \label{alg:sss}
    \begin{algorithmic}[1]
        \Require Test designs $\{\Lc^{(i)}\}_{i=1}^t$, outcomes $\Yv=(Y^{(1)},\dotsc,Y^{(t)})$
        \State Find $\Ehat$ that minimizes $|\Ehat|$ subject to $\phi_{\Ehat}(\Lc^{(i)}) = Y^{(i)}$ for all $i=1,\dotsc,t$, where the function $\phi_E(\Lc) = \vee_{(i,j) \in E} \{ \{i,j\} \subseteq \Lc \}$ corresponds to the observation model \eqref{eq:gt_noiseless_model}.
        \State \Return $\Ghat = (V,\Ehat)$ 
    \end{algorithmic}
\end{algorithm}

\begin{theorem} \label{thm:sss}
    Under the setup of Section \ref{sec:setup} with $q = \Theta(n^{2(\theta-1)})$ for some $\theta \in (0,1)$, and Bernoulli testing with an arbitrary choice of $\nu > 0$, the SSS algorithm yields $\pe \to 1$ whenever
    \begin{equation}
        t \le \big( 2\theta e \cdot \kbar \log n \big) (1-\eta) \label{eq:sss}
    \end{equation}
    for arbitrarily small $\eta > 0$.
\end{theorem}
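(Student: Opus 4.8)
The plan is to leverage the optimality of SSS noted before the theorem: since SSS returns a satisfying edge set of minimum cardinality, it fails to recover $G$ as soon as there exists a satisfying graph with strictly fewer edges than $G$. I would produce such a graph by deleting a single \emph{masked} edge. Call an edge $e\in E$ masked if no test \emph{isolates} it, meaning that every test whose node set $\Lc^{(i)}$ covers $e$ also covers some other edge of $G$. The key elementary observation is that deleting a masked edge leaves every outcome unchanged: a test covering $e$ together with another edge remains positive, a test not covering $e$ is unaffected, and no test has $e$ as its unique covered edge. Hence if a masked edge exists then $(V,E\setminus\{e\})$ is satisfying and has $|E|-1$ edges, so $\Ghat\ne G$ and $\pe=1$ on that realization. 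It therefore suffices to show $\PP[\exists\text{ masked edge}]\to1$. Throughout I would condition on the high-probability event $G\in\Tc(\epsilon_n)$ from Lemma~\ref{lem:typical} and establish the claim uniformly over typical graphs, then average.

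Fix a typical $G$, let $N$ count its masked edges with randomness over the Bernoulli design, and let $\pi_e$ be the probability that a single test isolates a fixed edge $e=(i,j)$. Isolation requires $i,j\in\Lc$ (probability $p^2$) and no other edge covered. Conditioned on $i,j\in\Lc$, the expected number of other covered edges equals $kp^2(1+o(1))\to\nu$: edges disjoint from $\{i,j\}$ each contribute $p^2$ and there are $k(1+o(1))$ of them, while the at most $2d$ edges incident to $i$ or $j$ contribute $O(dp)=o(1)$ by \eqref{eq:d_vs_k}. A Bonferroni/inclusion-exclusion argument mirroring the proof of Lemma~\ref{lem:typical} then gives $\pi_e=p^2e^{-\nu}(1+o(1))$, uniformly over typical $G$. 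By test independence $\PP[e\text{ masked}]=(1-\pi_e)^t=\exp(-tp^2e^{-\nu}(1+o(1)))$, and using $p^2=\frac{\nu}{\kbar}(1+o(1))$, $k\ge(1-\epsilon_n)\kbar$, and $\log\kbar=2\theta\log n\,(1+o(1))$ yields
\begin{equation}
\EE[N]\ \ge\ \kbar\exp\!\big(-2\theta\,(\nu e^{1-\nu})\,(1-\eta)\log n\,(1+o(1))\big).
\end{equation}
Since $\nu e^{1-\nu}\le1$ for all $\nu>0$, with equality at $\nu=1$, the exponent is at least $-2\theta(1-\eta)\log n$, so $\EE[N]\ge n^{2\theta\eta-o(1)}\to\infty$ for every fixed $\nu$ whenever $t\le(2\theta e\,\kbar\log n)(1-\eta)$; this is precisely where the universal constant $2\theta e$ (worst case over $\nu$, attained at $\nu=1$) enters.

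To pass from a diverging mean to $\PP[N\ge1]\to1$, I would use the second moment method, and here the graph structure makes the argument clean. For any two distinct edges $e\ne e'$, a single test cannot isolate both (each isolation demands that $e$, respectively $e'$, be the unique covered edge), so the per-test events are disjoint. Consequently $\PP[e,e'\text{ both masked}]=(1-\pi_e-\pi_{e'})^t\le\big((1-\pi_e)(1-\pi_{e'})\big)^t=\PP[e\text{ masked}]\,\PP[e'\text{ masked}]$, i.e.\ the masking indicators are pairwise negatively correlated. Hence $\var(N)\le\EE[N]$ and Chebyshev gives $\PP[N=0]\le\var(N)/\EE[N]^2\le1/\EE[N]\to0$, uniformly over typical $G$. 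Combined with $\PP[G\in\Tc(\epsilon_n)]\to1$, this yields $\pe\to1$ under \eqref{eq:sss}.

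The main obstacle is the precise isolation estimate $\pi_e=p^2e^{-\nu}(1+o(1))$. As in Lemma~\ref{lem:typical}, obtaining the exact constant $e^{-\nu}$ requires a careful inclusion-exclusion control of the probability that a random test covers \emph{no} edge other than $e$, now conditioned on $i,j\in\Lc$ and depending on the local structure around $e$; the delicate point is ensuring the $o(1)$ error terms are uniform over $\Tc(\epsilon_n)$, which is what lets the $\Theta(1)$ gap $1-\nu e^{1-\nu}$ survive in the exponent. By contrast, the reduction via optimality of SSS, the masked-edge deletion, the second-moment step (trivialized by the exact disjointness of isolation events), and the scalar optimization $\max_{\nu>0}\nu e^{1-\nu}=1$ are all routine.
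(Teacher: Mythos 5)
Your proposal is correct, and it shares the paper's core skeleton: the reduction ``masked edge $\Rightarrow$ a satisfying set of size $k-1$ exists $\Rightarrow$ SSS fails,'' conditioning on the typical set of Lemma \ref{lem:typical}, the per-test isolation probability $\pi_e = p^2 e^{-\nu}(1+o(1))$, the optimization $\max_{\nu>0}\nu e^{-\nu} = e^{-1}$, and—crucially—the observation that a single test cannot isolate two distinct edges. Where you genuinely diverge is in the concluding concentration step. The paper applies de Caen's bound to $\PP\big[\bigcup_{(i,j)\in E} M_{ij}\big]$, which forces it to control the per-test masking probability from \emph{both} sides (remainder terms $\xi=(1+2d)p^2$ and $\xi'=2dp$) and, more delicately, to verify that $tp^2\xi \to 0$ and $tp^2\xi' \to 0$ so that the correction factor multiplying the main term tends to $1$; the paper itself flags this fast-decay requirement as the fragile part of the argument. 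Your route—counting masked edges $N$, noting that disjointness of per-test isolation events gives $\PP[M_{e}\cap M_{e'}]=(1-\pi_e-\pi_{e'})^t\le\PP[M_e]\PP[M_{e'}]$, hence $\var(N)\le\EE[N]$ and $\PP[N=0]\le 1/\EE[N]$—needs only a one-sided (upper) bound on $\pi_e$ and tolerates $n^{o(1)}$ multiplicative slack, since the target gap $\EE[N]\ge n^{2\theta\eta-o(1)}$ is polynomial. So your version buys robustness: the pairwise term is handled exactly with no estimates at all, and the remainder-term bookkeeping that the paper must do disappears. One remark in your favor beyond what you claim: the step you single out as the ``main obstacle,'' namely $\pi_e \le p^2 e^{-\nu}(1+o(1))$ uniformly over typical graphs, does not require a fresh inclusion-exclusion argument mirroring Appendix \ref{sec:pf_typical}. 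As in the paper's own proof, it follows directly from the conclusion of Lemma \ref{lem:typical}: writing $A_2$ for the event that the test covers an edge disjoint from $\{i,j\}$, one has $\pi_e \le p^2(1-\PP[A_2])$ by independence of $A_2$ from $\{i,j\}\subseteq\Lc$, and $\PP[A_2]\ge \PP[Y=1]-(1+2d)p^2 = e^{-\nu}$-close to $\PP[Y=1]$, so the typical-set control of $\PP[Y=1]$ does all the work. With that substitution your argument is complete.
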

\begin{proof}
    The proof is based on the fact that if an edge is {\em masked} (i.e., its nodes never appear together in any test without those of a different edge), then removing that edge from $E$ will produce a smaller satisfying set, meaning that the algorithm fails to output $E$.  The details are given in Appendix \ref{sec:pf_sss}, and the main differences to the standard group testing analysis \cite{Ald14a} are highlighted in Appendix \ref{sec:differences}.
\end{proof}

\section{Sublinear-Time Decoding} \label{sec:sublinear}

A standard implementation of COMP or DD yields decoding complexity $O(n^2 t)$, which may be infeasible when $n$ is large and decoding time is limited.  To attain {\em sublinear-time} decoding, considerably different algorithms are needed, as one certainly cannot rely on marking non-edges one by one.  In Algorithm \ref{alg:groteque}, we informally outline a sublinear-time decoding algorithm that builds on the ideas of the GROTESQUE algorithm for group testing \cite{Cai13}.  We find it most convenient to formally describe the key components while simultaneously performing the theoretical analysis; see Sections \ref{sec:mult_test} and \ref{sec:loc_test}.  For the purpose of understanding the algorithm, it suffices to note the following:
\begin{itemize}[leftmargin=5ex,itemsep=0ex,topsep=0.25ex]
    \item A {\em multiplicity test} performs a number $\tmul$ of group tests in which the items from a given bundle are included independently with probability $\frac{1}{\sqrt 2}$. By counting the number of positive tests, one can determine with high probability whether or not the bundle covers exactly one edge.
    \item A {\em location test} performs a sequence of carefully-designed tests that permit the identification of the unique edge in a given bundle, provided that bundle indeed only covers one edge. 
\end{itemize}
The resulting number of tests and runtime are given in the following theorem.  Note that in contrast to the previous sections, here our focus is on the scaling laws and not the implied constants.  This is due to the fact that attaining 
sharp constant factors with sublinear-time decoding has remained an open challenge even in the simpler group testing setting \cite{Cai13,Lee15a,Ina19,Bon19a}.

\begin{algorithm}[!t]
    \caption{Group Testing Quick and Efficient (GROTESQUE) -- Informal Outline} \label{alg:groteque}
    \begin{algorithmic}[1]
        \Require Number of bundles $B$, inclusion probability $r$
        \State Form bundles $\Bc_1,\dotsc,\Bc_B$ by independently including each node in each $\Bc_b$ with probability $r$
        \State Initialize $\Ehat = \emptyset$
        \For {{\bf each} $b=1,\dotsc,B$}
            \State Perform a multiplicity test ({\em cf.}, Section \ref{sec:mult_test}) on $\Bc_b$
            \If{ multiplicity test returned ``single edge''}
                \State Perform location test ({\em cf.}, Section \ref{sec:loc_test}) on $\Bc_b$ and add the resulting edge to $\Ehat$
            \EndIf
        \EndFor
        \State \Return $\Ghat = (V,\Ehat)$
    \end{algorithmic}
\end{algorithm}

\begin{theorem} \label{thm:grotesque}
    Under the setup of Section \ref{sec:setup} with $q = \Theta(n^{2(\theta-1)})$ for some $\theta \in (0,1)$, the GROTESQUE test design and decoding algorithm achieves $\pe \to 0$ with $t = O( \kbar \cdot \log \kbar \cdot \log^2 n )$ tests, and the decoding time behaves as $O( \kbar \log^2 \kbar + \kbar \log n )$ with probability approaching one.
\end{theorem}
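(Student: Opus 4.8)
The plan is to condition on the typical-set event $G \in \Tc(\epsilon_n)$ of Lemma \ref{lem:typical}, which holds with probability $1-o(1)$ and supplies both the edge-count bound $k \le (1+\epsilon_n)\kbar$ and the degree bound $d \le \dmax$, and then to show that on this event the three ingredients of Algorithm \ref{alg:groteque} succeed simultaneously with probability approaching one. I would set the bundle inclusion probability to $r = \Theta(\kbar^{-1/2})$, so that the expected number of edges covered by a single bundle is $\lambda := \kbar r^2 = \Theta(1)$ (taken to be a small constant), and use $B = \Theta(\kbar \log \kbar)$ bundles. Bounding $\pe$ by a union over the failure events \textemdash{} some edge is never isolated, some multiplicity test misclassifies, or some location test errs \textemdash{} it then suffices to make each per-edge isolation failure $o(1/\kbar)$ and each per-bundle subroutine failure $o(1/B)$.

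The first and most graph-specific step is the \emph{isolation} analysis: for a fixed edge $e=(i,j)$ I would lower-bound the probability that some bundle covers $e$ and no other edge. Conditioning on $i,j \in \Bc_b$ (probability $r^2$), the other covered edges split into those incident to $\{i,j\}$ and those disjoint from $e$. Each disjoint edge is covered with probability $r^2$, so their expected number is at most $\lambda$ and a first-moment bound gives probability at least $1-\lambda$ that none is covered. The incident edges are the genuinely new difficulty relative to group testing, since each is covered with the larger probability $r$; here I would cap their number by $2\dmax$ using the degree bound and invoke $\dmax r = o(1)$ (which follows from $d^2 \ll k$ in \eqref{eq:d_vs_k}) to conclude that with probability $1-o(1)$ no incident edge is covered. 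Combining, $e$ is isolated in a given bundle with probability $\Theta(r^2)=\Theta(1/\kbar)$, and independence across the $B$ bundles yields $\PP[e \text{ never isolated}] \le (1-\Theta(1/\kbar))^B = \kbar^{-\Theta(1)}$; a union bound over the at most $(1+\epsilon_n)\kbar$ edges makes the total isolation failure vanish once the constant in $B$ is taken large enough.

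Next I would analyze the two per-bundle subroutines. For the \emph{multiplicity test}, each bundle node is resampled with probability $\tfrac{1}{\sqrt 2}$, so a bundle covering exactly one edge produces a positive outcome with probability exactly $\tfrac12$, whereas any configuration of two or more covered edges yields a strictly larger positive probability (the worst case, two edges sharing a vertex, giving $2\cdot\tfrac12-\tfrac{1}{2\sqrt2}>\tfrac12$). Thresholding the empirical positive frequency between these constants and applying Hoeffding's inequality over $\tmul = \Theta(\log n)$ repetitions drives the misclassification probability below $o(1/B)$. For the \emph{location test} I would use a split-colouring design: draw $\Theta(\log n)$ independent random red/blue colourings of the bundle and, for each colouring, test the red side, the blue side, and \textemdash{} for every bit position $\ell$ of the node label \textemdash{} the red side augmented by the bit-$\ell$ nodes of the blue side, and vice versa. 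When a colouring separates the two endpoints of the unique edge (probability $\tfrac12$ per colouring, so some colouring separates them except with probability $2^{-\Theta(\log n)}=o(1/B)$), both the red and blue outcomes are negative, which the decoder detects; holding the opposite side fixed then makes each augmented-side test positive exactly when the remaining endpoint has bit $\ell$ equal to one, so both endpoint labels are read off deterministically and without error.

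Finally I would collect the bookkeeping. The test count is $t = B(\tmul + \tloc)$ with $\tmul = \Theta(\log n)$ and $\tloc = \Theta(\log^2 n)$ \textemdash{} the extra logarithmic factor in $\tloc$ being unavoidable non-adaptively, since the bit-reading tests must be prepared relative to every candidate colouring \textemdash{} which gives $t = O(\kbar \log \kbar \log^2 n)$. For the runtime, the decoder scans the $\Theta(\log n)$-length multiplicity test of every bundle in total time $\Theta(B \log n) = O(\kbar \log^2 \kbar)$, and for each bundle flagged ``single edge'' it reads only the colouring outcomes until it finds a separating colouring and then that colouring's $\Theta(\log n)$ bit tests \textemdash{} never the full $\log^2 n$ block \textemdash{} so location contributes only $O(\kbar \log n)$ beyond the scan, matching $O(\kbar \log^2 \kbar + \kbar \log n)$. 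I expect the main obstacle to be the isolation step: controlling the incident edges through the degree bound is precisely the correlation phenomenon absent from group testing and is what forces the $\dmax$ constraint in \eqref{eq:typical}. A secondary delicate point is arranging the location test to be simultaneously non-adaptive, error-free in its bit-reading phase, and decodable in $O(\log n)$ time per isolated edge, which is what reconciles the larger test count with the smaller decoding time.
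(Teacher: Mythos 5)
Your proposal follows the paper's architecture almost exactly: the same bundle design with $r = \Theta(1/\sqrt{\kbar})$ and $B = \Theta(\kbar\log\kbar)$, the same isolation analysis splitting covered edges into those incident to $\{i,j\}$ (killed via $\dmax r = o(1)$, i.e., \eqref{eq:d_vs_k}) and those disjoint from it (killed via a first-moment bound on $kr^2$), the same Hoeffding-based multiplicity test with the same $\tfrac12$ vs.\ $0.646$ separation, and the same final accounting. The one genuinely different ingredient is your location test. The paper uses a deterministic scheme: for each bit position $\ell$ it tests all nodes whose $\ell$-th bit is $0$ and all whose $\ell$-th bit is $1$, and when the two endpoints disagree in positions $\ell' < \ell$ it adds a ``bit-pair consistency'' test (nodes with $v_{\ell} = v_{\ell'}$ equal to one of two matched patterns) to fix the relative orientation, using $2L + 2{L \choose 2}$ tests per bundle with zero error on a truly single-edge bundle. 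Your random-colouring design (two side tests plus $2L$ augmented tests per colouring, over $\Theta(\log n)$ colourings) achieves the same $\Theta(\log^2 n)$ test count and the same $O(\log n)$ reads per decoded edge, and is conceptually cleaner since it avoids the orientation bookkeeping; the price is a per-bundle failure probability $2^{-\Theta(\log n)}$, which you correctly absorb into the union bound over at most $B$ bundles. Both routes are valid and give the stated $t$ and runtime.

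There is, however, one genuine flaw in the proposal as written: the multiplicity classifier. A single cut-point ``between these constants'' (between $\tfrac12$ and $0.646$) classifies every bundle covering \emph{zero} edges as ``single edge'' with probability one, since such a bundle's empirical positive frequency is exactly $0$. These bundles are not an edge case but the overwhelming majority: with $\lambda = \kbar r^2 < 1$ a constant and $B = \Theta(\kbar \log \kbar)$ bundles, almost all bundles cover no edge. Your decoder would then run location decoding on each of them; all of their tests are negative, so the first colouring appears separating and every augmented test reads bit $0$, hence both endpoints decode to the all-zeros label and a degenerate pair is inserted into $\Ehat$, giving $\Ghat \ne G$. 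The repair is immediate and is exactly what the paper does: declare ``single edge'' only when the positive frequency lies in the interval $(0,0.573)$, i.e., additionally require at least one positive outcome (a true single-edge bundle violates this only with probability $2^{-\tmul} = o(1/B)$); alternatively, discard any decoded pair whose two labels coincide. With either repair your argument goes through --- and note that even the wasted decoding work on the zero-edge bundles, about $O(B\log n) = O(\kbar\log^2\kbar)$ time, would still have fit the runtime budget, so only the correctness claim, not the complexity claim, needed the fix.
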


The proof is given below after a short discussion.
While it may seem unusual to have a decoding time smaller than the number of tests, this is because the decoder is allowed to selectively decide which tests to make use of, and does not end up using them all.  (We implicitly assume that fetching the result of a given test can be done in constant time.) Comparing to Theorems \ref{thm:comp} and \ref{thm:dd}, we see that the number of tests performed has increased by a $\log \kbar \cdot \log n$ factor.  On the other hand, the decoding time is nearly optimal: An analogous argument to Theorem \ref{thm:conv} reveals an $\Omega(\kbar \log n)$ lower bound, and the upper bound in Theorem \ref{thm:grotesque} matches this result when $\log^2 \kbar = O(\log n)$, and more generally comes within at most a single logarithmic factor.

We briefly mention that the {\em encoding} time (i.e., placing nodes in tests) is certainly not sublinear, so the advantage of sublinear decoding time is most beneficial when the encoding time does not pose a bottleneck (e.g., due to an efficient parallel implementation and/or pre-processing).

\subsection{Proof Step 1 -- Bundles of Tests} \label{sec:bundles_main}

Since the number of edges $k$ and maximal degree $d$ behave as stated in \eqref{eq:typical} for some $\epsilon_n = o(1)$ with probability approaching one (see Lemma \ref{lem:typical}), it suffices to establish the claims of Theorem \ref{thm:grotesque} conditioned on an arbitrary graph $G$ satisfying such properties.  We implicitly condition on such a graph throughout the analysis.

As described in Algorithm \ref{alg:groteque}, we form a number $B$ of ``bundles'' of tests, where each node is placed in each bundle with probability $r \in (0,1)$.  In Appendix \ref{sec:bundles}, we use a direct probabilistic analysis to show that under a choice satisfying $B = \big(4\kbar \log \kbar\big) (1+o(1))$, we have with probability $1-o(1)$ that every edge is the unique one in at least one bundle.

\subsection{Proof Step 2 -- Multiplicity Tests} \label{sec:mult_test}

We perform a multiplicity test on each bundle by performing a series of (group) tests in which every node is independently included with probability $\frac{1}{\sqrt 2}$.  For each such test:
\begin{itemize}[leftmargin=5ex,itemsep=0ex,topsep=0.25ex]
    \item If there are no edges, the output is always $0$;
    \item If there is exactly one edge, each output equals $1$ with probability $\frac{1}{2}$;
    \item If there are multiple edges, each output equals $1$ with probability strictly higher than $\frac{1}{2}$.  To see this, first observe that if there are two edges $e_1,e_2$ among $3$ nodes, then the probability of a positive test is 
    % \begin{equation}
        $\PP[e_1 \cup e_2] = \PP[e_1] + \PP[e_2] - \PP[e_1 \cup e_2] = \frac{1}{2} + \frac{1}{2} - \frac{1}{2\sqrt{2}} > 0.646,$
    % \end{equation}
    whereas if there are two disjoint edges $e_1,e_2$ then a similar calculation yields $\PP[e_1 \cup e_2] \ge \frac{3}{4}$.  Hence, the overall probability of a positive test is at least $0.646$.
\end{itemize}
Based on these observations, we declare each bundle to have a single edge if and only if the proportion of $1$'s lies in $\big(0,0.573\big)$.  Trivially, if the number of edges is zero, we never make a mistake.  On the other hand, if the number of edges is one or more than one, we can apply Hoeffding's inequality with a margin of at least $0.073$; hence, using $\tmul$ tests we have
% \begin{equation}
    $\PP[{\rm misclassification}] \le 2e^{-2 \tmul \cdot 0.073^2 }$.
% \end{equation}
Taking the union bound over the $B$ bundles, and noting that $2 \times 0.073^2 > 0.01$, we find that we can classify all of the bundles correctly with probability approaching one when $\tmul = (100 \log B) (1+o(1))$, so that the total number of tests used is $\tmul B = (100 B \log B) (1+o(1))$.

\subsection{Proof Step 3 -- Location Tests} \label{sec:loc_test}

For location testing, we assign each node a unique binary string of length $L = \lceil \log_2 n \rceil$.  In the following, we consider an arbitrary bundle containing a single edge.  For ease of presentation, we describe the location test as though the algorithm could adaptively perform tests, and then we describe how the same can be done non-adaptively.

{\bf Adaptive location test.} The following procedure constructs two binary strings $A$ and $B$ of length $L$; these strings will index the two nodes in the bundle that have an edge between them.  For each $\ell=1,\dotsc,L$, we do the following:
\begin{enumerate}[leftmargin=5ex,itemsep=0ex,topsep=0.25ex]
    \item Test all nodes with a $0$ in their $\ell$-th bit.  If the test is positive, label both $A_{\ell}$ and $B_{\ell}$ as zero.
    \item Test all nodes with a $1$ in their $\ell$-th bit.  If the test is positive, label both $A_{\ell}$ and $B_{\ell}$ as one.
    \item If neither of the preceding tests is positive, we know that $A_{\ell} \ne B_{\ell}$, so we do the following:
    \begin{enumerate}[leftmargin=5ex,itemsep=0ex,topsep=0.25ex]
        \item If this is the first $\ell$ for which this case is encountered, then assign $A_{\ell} = 0$ and $B_{\ell} = 1$ (the other way around would be equally valid).
        \item Otherwise, let $\ell' < \ell$ be an index where we encountered this case, and do the following:
        \begin{enumerate}[leftmargin=5ex,itemsep=0ex,topsep=0.25ex]
            \item Let $v \in \{0,1\}$ be the bit value that was assigned to $A_{\ell'}$.
            \item Perform a test containing all nodes whose bit strings $(v_1,\dotsc,v_L)$ have $v_{\ell} = v_{\ell'} = v$ and also all those that have  $v_{\ell} = v_{\ell'} = 1-v$.  
            \item If the test is positive, then assign $A_{\ell} = v$ and $B_{\ell} = 1-v$.  Otherwise, assign $A_{\ell} = 1-v$ and $B_{\ell} = v$.
        \end{enumerate}
    \end{enumerate}
\end{enumerate}
The idea of step 3(b)(ii) is that we already know that the edges corresponding to $A$ and $B$ must have $v$ and $1-v$ in bit position $\ell'$ respectively, so we perform the test described to check whether the same is true of position $\ell$.  If it is not true, then with $A$ and $B$ differing in their $\ell$-th bit, the only remaining case is that $A$ and $B$ have $1-v$ and $v$ in bit position $\ell$ respectively.

{\bf Non-adaptive location test.}  The only types of tests that the adaptive algorithm above uses are (i) those used in Steps 1 and 2 with all nodes having a given $\ell$-th bit value; and (ii) those used in Step 3b containing all nodes with some $v_{\ell} = v_{\ell'} = v$ and some other $v_{\ell} = v_{\ell'} = 1-v$.  There are only $2L$ possible such tests of type (i), and $2{L \choose 2}$ possible tests of type (ii).  Hence, we can perform the tests non-adaptively by taking all such possible tests in advance.  Moreover, we don't have to look at all their outcomes, but rather only those that we would have taken in the adaptive setting.

Hence, the number of group tests per location test is at most $2\lceil \log_2 n \rceil + 2 \cdot \frac{1}{2} \lceil \log_2 n \rceil^2 = (\log_2 n)^2 (1+o(1))$, so if we perform one location test for each bundle (and again only actually use those that we need) then the total is $B(\log_2 n)^2 (1+o(1))$.

\subsection{Proof Step 4 -- Total Number of Tests and Decoding Time}

The claims on the number of tests and runtime stated in Theorem \ref{thm:grotesque} follow easily from the above analysis, and the details are deferred to Appendix \ref{sec:bundles}.

\section{Numerical Experiments}

We complement our theoretical findings with numerical experiments comparing COMP, DD, SSS, and a linear programming (LP) relaxation of SSS (analogous to \cite{Mal12}).\footnote{The code is available at \url{https://github.com/scarlett-nus/er_edge_det}.}
% \footnote{This relaxation is done in the same way as the standard group testing problem \cite{Mal12}, relaxing the integer $\{0,1\}$ constraint to the interval $[0,1]$ and then performing rounding.  This significantly reduces the computation.}   
Figure \ref{fig:numerical1} shows the success probability as a function of the number of tests in two cases:
(i) $n=50$ and $\kbar = 5$; (ii) $n = 200$ and $\kbar = 200$.  In each case, we set $\nu = 1$ and compute the error probability averaged over $2000$ trials.
In the first case, we observe that the SSS and LP curves are very close, and require the fewest tests; DD requires more tests, and COMP requires the most.  In the second case, we omit SSS due to its computational complexity, but we observe a similar ordering between LP, DD, and COMP.  In both cases, the relative performance between the algorithms is consistent with our theoretical findings:
\begin{itemize}[leftmargin=5ex,itemsep=0ex,topsep=0.25ex]
    \item The first case is a sparse setting, and the performance curves for COMP and DD are relatively closer, which is consistent with the fact that COMP and DD achieve the same theoretical bound in the sparse limit $\theta \to 0$ (see Figure \ref{fig:rates}).
    \item The second case is a denser setting, and the gap between LP and DD is narrower, which is consistent with the fact that the theoretical bounds for DD and SSS coincide in denser regimes.
\end{itemize}
In Appendix \ref{app:more_exp}, we provide similar plots for varying choices of $(\kbar,n)$ in order to demonstrate that the dependence of the number of tests on $\kbar$ and $n$ is in general agreement with our theory.

% In the second case, the information-theoretic threshold is $t \gtrsim 1327$ and the COMP threshold is  $t \lesssim 5761$.  Here we omit SSS due to its computational complexity.

% when $n = 100$ and $\nu = 1$ in two cases: (i) a sparser regime $\kbar = 10$; (ii) a denser regime $\kbar = 200$.  Note that  ignoring higher-order asymptotic terms, the information-theoretic threshold in Theorem \ref{thm:conv} evaluates to $t \gtrsim 89$ and $t \gtrsim 925$ for the two cases respectively, and the threshold for COMP evaluates to $t \lesssim 250$ and $t \lesssim 5007$ respectively.  

% This example is consistent with our theoretical findings, with the LP curve being slightly to the right of the information-theoretic lower bound, the number of tests for DD being slightly higher, and COMP requiring the most tests.  The gap between LP and DD becomes narrow in denser scenarios, which is consistent with the fact that the theoretical bounds for DD and SSS coincide in denser regimes (see Figure \ref{fig:rates}).

\begin{figure}
    \begin{centering}
        {% This file was created by matlab2tikz.
%
%The latest updates can be retrieved from
%  http://www.mathworks.com/matlabcentral/fileexchange/22022-matlab2tikz-matlab2tikz
%where you can also make suggestions and rate matlab2tikz.
%
\definecolor{dark green}{rgb}{0,0.7,0}%
\begin{tikzpicture}

\begin{axis}[%
width=5.4cm,
height=4cm,
scale only axis,
xmin=25,
xmax=450,
xtick={ 50, 100, 150, 200, 250, 300, 350, 400, 450},
xlabel style={font=\color{white!15!black}},
xlabel={Number of tests},
ymin=0,
ymax=1,
ytick={  0, 0.2, 0.4, 0.6, 0.8,   1},
ylabel style={font=\color{white!15!black}},
ylabel={Success probability},
axis background/.style={fill=white},
%axis x line*=bottom,
%axis y line*=left,
xmajorgrids,
ymajorgrids,
legend style={at={(0.99,0.015)}, anchor=south east, legend cell align=left, align=left, draw=white!15!black, font= \scriptsize,inner xsep=1pt, inner ysep=1pt}
]
\addplot [color=black, line width=1.0pt, mark=o, mark options={solid, black}]
  table[row sep=crcr]{%
25	0.083\\
50	0.3405\\
75	0.589\\
100	0.7455\\
125	0.829\\
150	0.8865\\
175	0.933\\
200	0.944\\
225	0.964\\
250	0.975\\
275	0.9805\\
300	0.982\\
325	0.992\\
350	0.99\\
375	0.9915\\
400	0.9955\\
425	0.9955\\
450	0.9945\\
};
\addlegendentry{SSS}

\addplot [color=red, line width=1.5pt]
  table[row sep=crcr]{%
25	0.0815\\
50	0.315\\
75	0.561\\
100	0.7265\\
125	0.806\\
150	0.875\\
175	0.924\\
200	0.935\\
225	0.961\\
250	0.972\\
275	0.9775\\
300	0.9805\\
325	0.9895\\
350	0.988\\
375	0.9895\\
400	0.995\\
425	0.995\\
450	0.9955\\
};
\addlegendentry{LP}

\addplot [color=blue, dash pattern=on 2pt off 3pt on 5pt off 3pt, line width=1.5pt]
  table[row sep=crcr]{%
25	0.016\\
50	0.1555\\
75	0.376\\
100	0.559\\
125	0.6795\\
150	0.7695\\
175	0.854\\
200	0.8735\\
225	0.9095\\
250	0.933\\
275	0.9545\\
300	0.9575\\
325	0.975\\
350	0.9795\\
375	0.983\\
400	0.9865\\
425	0.984\\
450	0.992\\
};
\addlegendentry{DD}

\addplot [color= dark green, dash pattern= on 5pt off 4pt, line width=1.5pt]
  table[row sep=crcr]{%
25	0.001\\
50	0.057\\
75	0.187\\
100	0.33\\
125	0.457\\
150	0.565\\
175	0.66\\
200	0.717\\
225	0.76\\
250	0.8145\\
275	0.854\\
300	0.87\\
325	0.8945\\
350	0.9135\\
375	0.93\\
400	0.9405\\
425	0.9415\\
450	0.9585\\
};
\addlegendentry{COMP}

\end{axis}
\end{tikzpicture}%~~~
        % This file was created by matlab2tikz.
%
%The latest updates can be retrieved from
%  http://www.mathworks.com/matlabcentral/fileexchange/22022-matlab2tikz-matlab2tikz
%where you can also make suggestions and rate matlab2tikz.
%
\definecolor{dark green}{rgb}{0,0.7,0}%
\begin{tikzpicture}
\pgfplotsset{scaled x ticks=false}
\begin{axis}[%
width=5.1cm,
height=4cm,
scale only axis,
xmin=3000,
xmax=10000,
xtick={4000,  6000, 8000,10000},
minor xtick={3000,  5000, 7000,9000},
xlabel style={font=\color{white!15!black}},
xlabel={Number of tests},
ymin=0,
ymax=1,
ytick={  0, 0.2, 0.4, 0.6, 0.8,   1},
ylabel style={font=\color{white!15!black}},
ylabel={Success probability},
axis background/.style={fill=white},
%axis x line*=bottom,
%axis y line*=left,
xmajorgrids,
xminorgrids,
ymajorgrids,
legend style={at={(0.99,0.015)}, anchor=south east, legend cell align=left, align=left, draw=white!15!black, font=\scriptsize,inner xsep=1pt, inner ysep=1pt}
]
\addplot [color=red, line width=1.5pt]
  table[row sep=crcr]{%
3000	0.0285\\
3250	0.1\\
3500	0.235\\
3750	0.372\\
4000	0.518\\
4250	0.6555\\
4500	0.759\\
4750	0.8185\\
5000	0.878\\
5250	0.915\\
5500	0.941\\
5750	0.955\\
6000	0.9755\\
6250	0.983\\
6500	0.9865\\
6750	0.99\\
7000	0.9935\\
7250	0.9965\\
7500	0.9985\\
7750	0.9985\\
8000	0.998\\
8250	0.999\\
8500	0.9995\\
8750	1\\
9000	0.999\\
9250	1\\
9500	0.9995\\
9750	1\\
10000	1\\
};
\addlegendentry{LP}

\addplot [color=blue, dash pattern=on 2pt off 3pt on 5pt off 3pt, line width=1.5pt]
  table[row sep=crcr]{%
3000	0\\
3250	0.006\\
3500	0.0525\\
3750	0.1695\\
4000	0.304\\
4250	0.4855\\
4500	0.6405\\
4750	0.7665\\
5000	0.8435\\
5250	0.905\\
5500	0.9295\\
5750	0.946\\
6000	0.968\\
6250	0.9755\\
6500	0.9765\\
6750	0.993\\
7000	0.9925\\
7250	0.9935\\
7500	0.9965\\
7750	0.9975\\
8000	0.999\\
8250	0.9985\\
8500	0.9995\\
8750	0.998\\
9000	1\\
9250	0.9985\\
9500	0.999\\
9750	0.9995\\
10000	1\\
};
\addlegendentry{DD}

\addplot [color=dark green, dash pattern= on 5pt off 4pt, line width=1.5pt]
  table[row sep=crcr]{%
3000	0\\
3250	0\\
3500	0\\
3750	0\\
4000	0\\
4250	0\\
4500	0\\
4750	0.001\\
5000	0.007\\
5250	0.0215\\
5500	0.042\\
5750	0.1065\\
6000	0.1435\\
6250	0.2495\\
6500	0.337\\
6750	0.4705\\
7000	0.5565\\
7250	0.6415\\
7500	0.712\\
7750	0.7675\\
8000	0.8475\\
8250	0.8855\\
8500	0.9165\\
8750	0.9275\\
9000	0.954\\
9250	0.9635\\
9500	0.9675\\
9750	0.9815\\
10000	0.9825\\
};
\addlegendentry{COMP}

\end{axis}

\end{tikzpicture}%}
        \par
    \end{centering}
    
    \caption{Performance of the COMP, DD, LP, and SSS algorithms for noiseless group testing under Bernoulli testing with $\nu = 1$, and with $n = 50$ and $\kbar = 5$ (Left); $n = 200$ and $\kbar = 200$ (Right). \label{fig:numerical1}}
\end{figure}
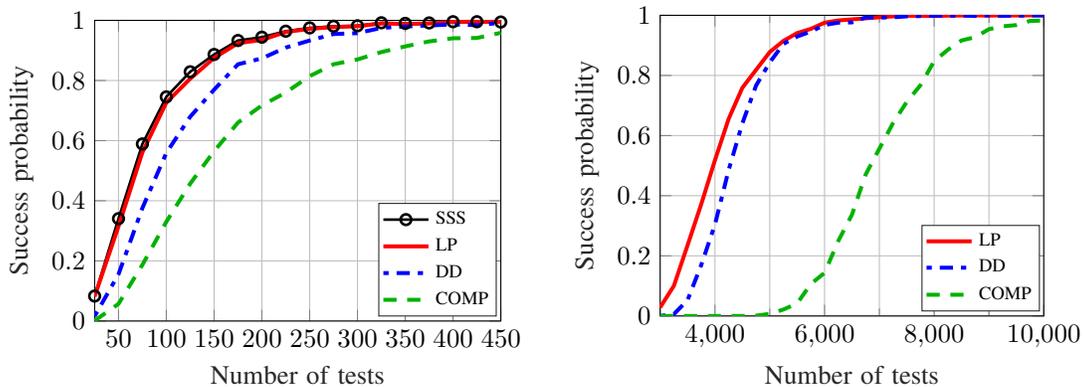

\section{Conclusion}

We have studied the problem of learning Erd\H{o}s-R\'enyi random graphs via edge detecting queries, and demonstrated significantly improved scaling of $O(\kbar \log n)$ compared to worst-case graphs with $k$ edges.  We provided order-optimal bounds for the COMP, DD, and SSS algorithms with explicit constants, showed DD to be optimal under Bernoulli testing when the graph is sufficiently dense ($\theta \ge \frac{1}{2}$), and introduced a sublinear-time algorithm that succeeds with $O(\kbar \log^2 \kbar + \kbar \log n)$ runtime. 
% In addition, we introduced a sublinear-time algorithm attaining vanishing error probability with $O(\kbar \log^2 \kbar + \kbar \log n)$ runtime under a suitably-chosen test design.

Given that the ideas of this paper build on a variety of techniques for small-error group testing, it is natural to pursue further research in directions that were done previously in that setting, including separate decoding of items (edges) \cite{Mal80,Sca17b}, information-theoretic achievability bounds \cite{Sca15b,Coj19}, and near-constant tests-per-item (tests-per-node) designs \cite{Joh16,Coj19}.  Generalizations of our techniques to hypergraph learning \cite{Ang06,Ang08,Dya16,Aba18a} would also be of significant interest.

% {\bf Acknowledgment.} This work was supported by an NUS Early Career Research Award.

\appendices

%{\centering
%    {\huge \bf Supplementary Material}
%    
%    {\large \bf Learning Erd\H{o}s-R\'enyi Random Graphs via Edge Detecting Queries \\ [2mm] {\normalsize Zihan Li, Matthias Fresacher, and Jonathan Scarlett (NeurIPS 2019)} \par }  
%}

% \smallskip
% \bigskip
% % All citations below are to the reference list in the main body.

\section{Proof of Lemma \ref{lem:typical} (High Probability Typicality)} \label{sec:pf_typical}

To prove that there exists a sequence $\epsilon_n \to 0$ such that $\PP[G \in \Tc(\epsilon_n)] \to 1$ as $n \to \infty$, it suffices to show that $\PP[G \in \Tc(\epsilon)] \to 1$ as $n \to \infty$ for arbitrarily small (but fixed) $\epsilon > 0$.

Since $q = \Theta(n^{-2(1-\theta)})$ for some $\theta \in (0,1)$, the average number of edges $\kbar = q{n \choose 2}$ must grow unbounded as $n \to \infty$.  Hence, the fact that $\PP[(1-\epsilon) \kbar \le k \le (1+\epsilon)\kbar] \to 1$ follows via basic binomial concentration (e.g., Chernoff bound).  The bound on the degree is also easy to handle:
\begin{itemize}[leftmargin=5ex,itemsep=0ex,topsep=0.25ex]
    \item For $\theta > \frac{1}{2}$, the per-node degree follows a binomial distribution with mean $(n-1)q = \Theta( n^c )$ for some $c > 0$.  Hence, by the multiplicative form of the Chernoff bound \cite[Sec.~4.1]{Mot10}, the probability of the degree exceeding $2nq$ (which exceeds double the average) behaves as $e^{- \Omega(n^c)}$.  By a union bound over the $n$ nodes, we find that the probability of any node's degree exceeding $2nq$ is at most $n e^{- \Omega(n^c)} \to 0$.
    \item For $\theta \le \frac{1}{2}$, it suffices to consider the case $\theta = \frac{1}{2}$, for which the probability of the maximal degree exceeding $\log n$ is clearly highest.  In this case, a given node's degree follows a binomial distribution with $n-1$ trials and success probability $\Theta\big( \frac{1}{n} \big)$, so the mean behaves as $\Theta(1)$.  Therefore, the probability of the degree exceeding $\log n$ equals the probability of being at least a factor $1 + \Delta$ higher than the mean $(n-1)q$, for some $\Delta = \Theta(\log n)$.  By a standard form of the Chernoff bound \cite[Sec.~4.1]{Mot10}, this occurs with probability at most $e^{- (n-1)q [ (1+\Delta)\log(1+\Delta) - \Delta]}$, which behaves as $e^{- \Omega(\log n \cdot \log \log n)}$ since $(n-1)q = \Theta(1)$.  By a union bound over the $n$ nodes, the probability of any degree exceeding $\log n$ is at most $n e^{- \Omega( \log n \cdot \log \log n)} \to 0$.
\end{itemize} 

By far the most challenging event to handle in the typical set is the final one, $(1-\epsilon)(1-e^{-\nu}) \le \PP_G[Y=1] \le (1+\epsilon)(1-e^{-\nu})$.  The intuition behind the analysis is as follows:
\begin{itemize}[leftmargin=5ex,itemsep=0ex,topsep=0.25ex]
    \item In generic notation, let $A_1,\dotsc,A_N$ be independent events each occurring with probability $r$.  Then $\PP\big[ \bigcup_{i=1}^N A_i  \big] = 1 - (1-r)^N$, which behaves as $(1-e^{-Nr})(1+o(1))$ as $N \to \infty$ with $r = \Theta\big(\frac{1}{N}\big)$.
    \item Letting $S_j = \sum_{1 \le i_1 < \dotsc < i_j \le N} \PP[ A_{i_1} \cap \dotsc \cap A_{i_j} ]$, we know from the Bonferroni inequalities \cite{Gal96} that
    \begin{equation}
        \PP\bigg[ \bigcup_{i=1}^N A_i  \bigg] \le \sum_{j=1}^{j_{\max}} (-1)^{j+1} S_j
    \end{equation}
    for odd $j_{\max}$, and the inequality is flipped for even $j_{\max}$.  
    \item In the special case of independent events and $N \to \infty$ with $r = \Theta\big(\frac{1}{N}\big)$, assuming that $j_{\max} = O(1)$, we have
    \begin{align}
        \sum_{j=1}^{j_{\max}} (-1)^{j+1} S_j
            &=  \sum_{j=1}^{j_{\max}} (-1)^{j+1} {N \choose j} r^j \\
            &= -\sum_{j=1}^{j_{\max}} \frac{1}{j!} (-Nr)^j (1+o(1)), \label{eq:intuition_end}
    \end{align}
    since ${N \choose j} = \frac{N!}{j!(N-j)!} = \frac{1}{j!} N^j (1+o(1))$.  Due to the limit $\sum_{j=1}^{\infty} \frac{1}{j!} (-Nr)^j = e^{-Nr}-1$, \eqref{eq:intuition_end} is arbitrarily close to $1-e^{-Nr}$ for large $j_{\max}$, regardless of whether $j_{\max}$ is even or odd.  As one should expect, this matches the probability computed directly in the first dot point above.
    \item For the graph learning problem, we do not have independent events, but we will still show similar behavior to \eqref{eq:intuition_end} to deduce the precise probability of a positive test given $G$.
\end{itemize}
We now proceed with the formal argument.

{\bf High-probability counting event.} Fix an arbitrary integer $j_{\max} > 0$, and for each $j=1,\dotsc,j_{\max}$ and $\ell = 1,\dotsc,2j-1$, let $U_{j,\ell}(G)$ be the number of sets of exactly $j$ edges in $G$ that collectively consist of exactly $\ell$ nodes.\footnote{For some combinations of $j$ and $\ell$ we trivially have $U_{j,\ell}(G) = 0$, but there is no need for us to explicitly account for this.}  We define the following typicality event for a graph $G$:
\begin{equation}
U_{j,\ell}(G) \le \frac{1}{\delta} {n \choose \ell} { {\ell \choose 2} \choose j } q^j, \quad \forall j=1,\dotsc,j_{\max}, ~~~ \ell = 1,\dotsc,2j-1. \label{eq:Ujl}
\end{equation}
We claim that the probability (with respect to $G \sim \ER(n,q)$) of this occurring is at least $1 - O(\delta)$, where the implied constant depends on $j_{\max}$.

To see this, first note that any fixed collection of $j$ node pairs are all edges with probability $q^j$.  The number of such collections satisfying the condition defining $U_{j,\ell}(G)$ is at most ${n \choose \ell}{ {\ell \choose 2} \choose j }$, i.e., first choose $\ell$ nodes from $n$, and then choose $j$ edges from the corresponding $\ell \choose 2$ possibilities.  Hence,
\begin{equation}
\EE[ U_{j,l}(G) ] \le {n \choose \ell}{ {\ell \choose 2} \choose j } q^j,
\end{equation}
and the $1-O(\delta)$ probability claim follows from Markov's inequality, a union bound over $j=1,\dotsc,j_{\max}$ and $\ell=1,\dotsc,2j-1$, and the assumption that $j_{\max}$ is finite.

{\bf Analysis of $\PP_G[Y=1]$ with respect to a random test.} Let $G = (V,E)$ be a fixed graph whose number of edges $k$ and maximal degree $d$ satisfies the typicality bounds in \eqref{eq:typical}, and that also satisfies the high-probability counting event \eqref{eq:Ujl}.  We write
\begin{equation}
    \PP[Y = 1] = \PP\bigg[ \bigcup_{e \in E} A_e \bigg],
\end{equation}
where $A_e$ is the event that both nodes from $e$ are in the test (and hence $\PP[A_e] = p^2$), and here and subsequently we implicitly condition on $G$ being the graph (i.e., we write $\PP[\cdot]$ in place of $\PP_G[\cdot]$).  The Bonferroni inequality therefore states that
\begin{equation}
    \PP[Y = 1]  \le \sum_{j=1}^{j_{\max}} (-1)^{j+1} S_j
\end{equation}
for odd $j$ (or the reverse for even $j$), where 
\begin{equation}
    S_j = \sum_{ 1 \le i_1 < \dotsc < i_j \le k } \PP\big[ A_{e(i_1)} \cap \dotsc \cap A_{e(i_j)}  \big], \label{eq:Sj}
\end{equation}
and where $A_{e(i)}$ is the $i$-th edge for some arbitrary but fixed ordering of the $k$ edges.

We proceed by characterizing $S_j$ for fixed $j$, assuming $j_{\max} = O(1)$ throughout. We will show that the summation in \eqref{eq:Sj} is asymptotically equivalent to the restricted summation in which the $j$ edges are disjoint, i.e., share no nodes in common.  First note that for disjoint $e(i_1),\dotsc,e(i_j)$, we have $\PP\big[ A_{e(i_1)} \cap \dotsc \cap A_{e(i_j)}  \big] = p^{2j}$.  Since there are trivially at most ${k \choose j}$ ways of choosing $j$ disjoint edges, it follows that
\begin{align}
    \sum_{ \substack{1 \le i_1 < \dotsc < i_j \le k \\ \text{edges disjoint} } } \PP\big[ A_{e(i_1)} \cap \dotsc \cap A_{e(i_j)}  \big] 
        & \le {k \choose j} \cdot p^{2j} \\
        &= \Big( \frac{1}{j!} \cdot k^j \cdot p^{2j} \Big)(1+o(1)), \label{eq:disjoint_ub}
\end{align}
since $\frac{k!}{(k-j)!} = k^j (1+o(1))$ as $k \to \infty$ with $j = O(1)$.

We now seek a matching lower bound to \eqref{eq:disjoint_ub}.  The number of unordered sequences of $j$ disjoint edges is equal to $\frac{1}{j!}$ times the number ordered sequences of $j$ disjoint edges, and to count the latter, we consider a sequential selection of nodes.  Whenever a node is selected, at most $d+1$ nodes are ruled out due to being its neighbor (or itself), so the number of selections is at least $k(k-(d+1))(k-2(d+1))\dotsc (k-(j-1)(d+1))$.  But since $j = O(1)$ and $d = o(k)$ (see \eqref{eq:d_vs_k}), this simply  behaves as $k^j (1+o(1))$, and we deduce that
\begin{equation}
     \sum_{ \substack{1 \le i_1 < \dotsc < i_j \le k \\ \text{edges disjoint} } } \PP\big[ A_{e(i_1)} \cap \dotsc \cap A_{e(i_j)}  \big] \ge \Big( \frac{1}{j!} \cdot k^j \cdot p^{2j} \Big)(1+o(1)). \label{eq:disjoint_lb}
\end{equation}

We now show how to use \eqref{eq:disjoint_ub} and \eqref{eq:disjoint_lb} to deduce upper and lower bounds on $S_j$.  In fact, the lower bound is trivial, since we can simply drop any remaining terms (i.e., those with non-disjoint edges) in \eqref{eq:Sj} by lower bounding the summand by zero.  For the upper bound, however, some additional effort is required.  We let $N(i_1,\dotsc,i_j)$ denote the number of nodes that the edges $e(i_1),\dotsc,e(i_j)$ collectively contain, and write
\begin{equation}
    S_j = \sum_{\ell=1}^{2j} \sum_{ \substack{ 1 \le i_1 < \dotsc < i_j \le k \\ N(i_1,\dotsc,i_k) = \ell }} \PP\big[ A_{e(i_1)} \cap \dotsc \cap A_{e(i_j)}  \big]. \label{eq:Sj2}
\end{equation}
We bound the inner summand separately for each $\ell$.  Note that $\ell = 2j$ corresponds to the disjoint case that we already handled, so we proceed by assuming that $\ell < 2j$.

For each $\ell < 2j$, the summand in \eqref{eq:Sj2} is equal to $p^{\ell}$, and according to the high-probability event in \eqref{eq:Ujl}, the number of such summands is at most $U_{j,\ell} \le \frac{1}{\delta} {n \choose \ell} { {\ell \choose 2} \choose j } q^j$, which we can crudely further upper bound as
\begin{equation}
    U_{j,\ell} \le O\Big( \frac{1}{\delta} \Big) \cdot n^{\ell} q^j,
\end{equation}
since $j$ and $\ell$ are both $O(1)$.  Since $\ell < 2j$, or equivalently $\ell \le 2j-1$ or $j \ge \frac{\ell+1}{2}$, we can write $q^j \le  q^{\ell/2} \cdot q^{1/2}$ (note that $q < 1$), which implies
\begin{equation}
    U_{j,\ell} \le O\Big( \frac{1}{\delta} \Big) \cdot n^{\ell} q^{\ell/2} q^{1/2},
\end{equation}
and hence
\begin{align}
    \sum_{ \substack{ 1 \le i_1 < \dotsc < i_j \le k \\ N(i_1,\dotsc,i_k) = \ell }} \PP\big[ A_{e(i_1)} \cap \dotsc \cap A_{e(i_j)}  \big] 
        &\le O\Big( \frac{1}{\delta} \Big) \cdot n^{\ell} q^{\ell/2} q^{1/2} p^{\ell} \\
        &\le O\Big( \frac{q^{1/2}}{\delta} \Big) \cdot k^{\ell/2} p^{\ell}, \label{eq:ell_simplified}
\end{align}
where we have used $k = \big(\frac{1}{2} qn^2\big)(1+o(1))$.  Now observe that if we choose $\delta$ to tend to zero at a strictly slower rate than $q^{1/2}$, the $O(\cdot)$ term in \eqref{eq:ell_simplified} behaves as $o(1)$.  Since $kp^2 = \Theta(1)$ by our choice of $p$, and $\ell = O(1)$ by assumption, we conclude that overall \eqref{eq:ell_simplified} vanishes as $n \to \infty$, for any $\ell < 2j$.  In contrast, the term \eqref{eq:disjoint_lb} corresponding to $\ell = 2j$ behaves as $\Theta(1)$.  Since there are only a finite number of $\ell$ values, we deduce that the overall outer sum \eqref{eq:Sj2} is asymptotically equivalent to its first term characterized in \eqref{eq:disjoint_ub}--\eqref{eq:disjoint_lb}:
\begin{equation}
    S_j = \Big( \frac{1}{j!} \cdot k^j \cdot p^{2j} \Big)(1+o(1)).
\end{equation}
By identifying this asymptotic expression with \eqref{eq:intuition_end}, with $N = k$ and $r = p^2$, we conclude that the following holds as $j_{\max} \to \infty$:
\begin{equation}
    \PP[Y=1] = \big(1-e^{-kp^2}\big)(1+o(1)) = (1-e^{-\nu}) \cdot (1+o(1)), \label{eq:PY}
\end{equation} 
since $p^2 = \frac{\nu}{\kbar} = \frac{\nu}{k}(1+o(1))$.
In other words, by choosing $j_{\max}$ sufficiently large, we can ensure that under the high-probability event in \eqref{eq:Ujl} and the high-probability bounds on $k$ and $d$ in \eqref{eq:typical}, it holds that $(1-e^{-\nu}) (1-\epsilon) \le \PP[Y=1] \le (1-e^{-\nu}) (1+\epsilon)$ for arbitrarily small $\epsilon > 0$ and sufficiently large $n$.  This completes the proof of Lemma \ref{lem:typical}.

\section{Proof of Theorem \ref{thm:conv} (Converse Bound)} \label{app:pf_converse}

We use a conditional form of Fano's inequality (e.g., \cite[Thm.~3]{Sca19b}) with conditioning on the event that the number of edges $k$ in $G$ satisfies $(1-\epsilon) \kbar \le k \le \kbar (1+\epsilon)$ for small $\epsilon > 0$.  Denoting this event by $\Ac$, and using the usual notation $H(X)$, $H(Y|X)$, $I(X;Y)$, etc.~for entropy and mutual information, Fano's inequality gives
\begin{align}
    \pe &\ge \PP[ \Ac ] \frac{H(G|\Ghat,\Ac={\rm true}) - \log 2 }{ \log |\Gc_{\Ac}| } \\
    &= \PP[ \Ac ] \frac{H(G|\Ac={\rm true}) - I(G; \Ghat|\Ac={\rm true}) - \log 2 }{ \log |\Gc_{\Ac}| },  \label{eq:Fano2}
\end{align}
where $\Gc_{\Ac}$ is the set of graphs such that $(1-\epsilon) \kbar \le k \le \kbar (1+\epsilon)$.

Note that the preceding condition on $k$ is a standard notion of {\em typicality} for collections of independent random variables (in this case, edges).  Using standard properties of typical sets \cite[App.~C]{Sha14}, we have $\PP[\Ac] = 1-o(1)$, $\log |\Gc_{\Ac}|  = {n \choose 2}H_2(q) (1+o(1))$, and $H(G|\Ac={\rm true}) = {n \choose 2}H_2(q) (1+o(1))$, where $H_2(q) = q\log\frac{1}{q} + (1-q)\log\frac{1}{1-q}$ is the binary entropy function.  In addition, the data processing inequality \cite[Sec.~2.8]{Cov01} gives $I(G; \Ghat|\Ac={\rm true}) \le I(G; \Yv |\Ac={\rm true})$, and since $\Yv \in \{0,1\}^t$, this mutual information is further upper bounded by $t \log 2$.  Substituting the preceding findings into \eqref{eq:Fano2} yields
\begin{align}
    \pe \ge \Big( 1 - \frac{t \log 2}{ {n \choose 2}H_2(q) } \Big) (1+o(1)).
\end{align}
Since we consider the regime $q \to 0$, we have $H_2(q) = \big(q\log\frac{1}{q} \big)(1+o(1))$, and hence
\begin{align}
    \pe \ge \Big( 1 - \frac{t \log 2}{ \frac{1}{2} q n^2 \log\frac{1}{q} } \Big) (1+o(1)).
\end{align}
Since $\kbar = \frac{1}{2} q n^2 (1+o(1))$, we conclude that achieving $\pe \to 0$ requires \eqref{eq:t_conv}.

\section{Proof of Theorem \ref{thm:comp} (COMP Upper Bound)} \label{app:pf_comp}

Since the random graph is in the typical set \eqref{eq:typical} with probability approaching one, it suffices to establish that the number of tests \eqref{eq:comp} yields asymptotically vanishing error probability conditioned on an arbitrary typical graph $G \in \Tc_n(\epsilon_n)$, with $\epsilon_n = o(1)$ due to Lemma \ref{lem:typical}.  We implicitly condition on such a graph throughout the analysis.  

Let $(i,j)$ be a given non-edge of $G$.  A particular test fails to identify this non-edge if either (i) $i$ and/or $j$ are not included in the test; or (ii) $i$ and $j$ are both in the test, but there is also an edge covered by the test.  Hence, the probability that a given test fails to identify $(i,j)$ as a non-edge is
\begin{align}
    p_0 := (1-p^2) + p^2 \PP\big[ Y=1 \,\big|\, \{i,j\} \subseteq \Lc\big], \label{eq:p0}
\end{align}
where we recall that $\Lc$ is the set of nodes in the test.
Note that to obtain $Y=1$, we need the test to include either a node with an edge connected to $i$ or $j$, or two separate nodes with an edge between them.  Denoting these two events by $A_1$ and $A_2$, we have
\begin{align}
    \PP\big[ Y=1 \,\big|\, \{i,j\} \subseteq \Lc\big]
        &\le \PP\big[ A_1 \,\big|\, \{i,j\} \subseteq \Lc\big] + \PP\big[ A_2 \,\big|\, \{i,j\} \subseteq \Lc\big] \\
        &\le 2dp + \PP[Y=1],
\end{align}
where the first term follows because there are at most $2d$ nodes connected to $i$ or $j$, and the second term uses the fact that $A_2$ is independent of the event $\{i,j\} \subseteq \Lc$ and in itself implies $Y=1$.  Substituting $\PP[Y=1] = (1-e^{\nu}) (1+o(1))$ in accordance with \eqref{eq:typical}, recalling from \eqref{eq:d_vs_k} that $2dp = o(1)$, and returning to \eqref{eq:p0}, we obtain
\begin{align}
    p_0 &\le 1 - p^2 + p^2 \big( (1-e^{\nu})(1+o(1)) + o(1) ) \\
        &= 1 - p^2 e^{-\nu} (1+o(1)),
\end{align}
since $\nu$ is constant.  Hence, the probability that {\em all} $t$ tests fail to identify $(i,j)$ as a non-edge is
\begin{align}
    p_0^t &= \Big(1 - p^2 e^{-\nu} (1+o(1)) \Big)^t \\
        &\le e^{ -t p^2 e^{-\nu} (1+o(1)) },
\end{align}
since $1-\alpha \le e^{-\alpha}$.
Substituting $p^2 = \frac{\nu}{k} (1+o(1))$ and setting $\nu = 1$, we obtain
\begin{equation}
    p_0^t \le e^{ -\frac{t}{ek} (1+o(1)) },  \label{eq:p0t_ub}
\end{equation}
and by a union bound over at most ${n \choose 2} \le n^2$ non-edges, it follows that
\begin{equation}
    \PP[{\rm error}] \le n^2 e^{ -\frac{t}{ek} (1+o(1)) }.
\end{equation}
Re-arranging, we deduce that $\PP[{\rm error}] \to 0$ as long as
\begin{equation}
    t \ge \big( 2e \cdot k \log n \big) (1+\eta)
\end{equation}
for arbitrarily small $\eta > 0$.  Since $k = \kbar(1+o(1))$ for all typical graphs, and the probability that $G$ is typical tends to one (see Lemma \ref{lem:typical}), we obtain the condition in \eqref{eq:comp}.

\section{Proof of Theorem \ref{thm:dd} (DD Upper Bound)} \label{sec:pf_dd}

Since the random graph is in the typical set \eqref{eq:typical} with probability approaching one, it suffices to establish that the number of tests \eqref{eq:dd} yields vanishing error probability conditioned on an arbitrary typical graph $G \in \Tc_n(\epsilon_n)$, with $\epsilon_n = o(1)$ due to Lemma \ref{lem:typical}.  We implicitly condition on such a graph $G$ throughout the analysis.

\subsection{First Step} \label{sec:dd_first}

The first step of DD gives a set of ``possible edges'' $\PE$ that may contain non-edges.  Let $H_0$ be the total number of non-edges in $\PE$, and let $H_1$ be the number of non-edges in $\PE$ such that at least one of its two nodes forms part of at least one true edge.  Since the total number of non-edges is less than $n^2$, we have from \eqref{eq:p0t_ub} that
\begin{equation}
    \EE[H_0] \le n^2 e^{ -\frac{t}{ek} (1+o(1)) }.  \label{eq:EH0}
\end{equation}
Similarly, since the total number of non-edges sharing a node with a true edge is at most $2kd$ (and also trivially less than $n^2$), we have
\begin{equation}
    \EE[H_1] \le \min\{2kd,n^2\} e^{ -\frac{t}{ek} (1+o(1)) }. \label{eq:EH1}
\end{equation}
By Markov's inequality, it follows for any $\xi_0 > 0$ and $\xi_1 > 0$ that that
\begin{align}
    \PP[H_0 \ge n^{2\xi_0}] &\le n^{2(1 - \xi_0)} e^{ -\frac{t}{ek} (1+o(1)) } \\
    \PP[H_1 \ge n^{2\xi_1}] &\le \min\{2kd,n^2\} n^{-2\xi_1} e^{ -\frac{t}{ek} (1+o(1)) }. \label{eq:H1_Markov}
\end{align}
Re-arranging, we deduce that these two probabilities both vanish as $n \to \infty$ as long as
\begin{gather}
    t \ge \Big( 2(1-\xi_0) e k \log n \Big) (1+\eta), \\
    t \ge (1+\eta) e k \log n \times 
    \begin{cases}
        2(1 - \xi_1) & \frac{3}{4} \le \theta < 1  \\
        4\theta - 1 - 2\xi_1 & \frac{1}{2} < \theta < \frac{3}{4} \\
         2(\theta - \xi_1) & 0 < \theta \le \frac{1}{2}
    \end{cases}
\end{gather}
for arbitrarily small $\eta > 0$; here, the first case uses the $n^2$ term in the $\min\{\cdot\}$ in \eqref{eq:H1_Markov}, the second case uses the $2kd$ term and the fact that $k = \Theta( n^{2\theta} )$ and $d = \Theta(nq) = \Theta( n^{2\theta - 1} )$ for $\theta > \frac{1}{2}$, and the third case uses $k = \Theta( n^{2\theta} )$ and $d = O(\log n)$ for $\theta \le \frac{1}{2}$.

It will shortly prove convenient to ensure that $H_0= o(k)$ and $H_1 = o(\sqrt{k})$ (with high probability).  We achieve this by setting $\xi_0$ to be arbitrarily close to (but still less than) $\theta$, and similarly $\xi_1$ arbitrarily close to $\theta/2$, so that the above requirements simplify to
\begin{gather}
t \ge \big( 2(1-\theta) e k \log n \big) (1+\eta), \label{eq:dd_t1_1} \\
t \ge (1+\eta) e k \log n \times
\begin{cases}
2 - \theta & \frac{3}{4} \le \theta < 1  \\
3\theta - 1 & \frac{1}{2} < \theta < \frac{3}{4} \\
 \theta  & 0 < \theta \le \frac{1}{2}
\end{cases} \label{eq:dd_t1_2} 
\end{gather}
for arbitrarily small $\eta > 0$.

\subsection{Second Step} \label{sec:dd_second}

We condition on the above-mentioned high-probability events from the first step holding: $H_0 = o(k)$ and $H_1 = o(\sqrt{k})$.  In addition, we may assume that the number of positive tests $T_+$ satisfies
\begin{equation}
    T_{+} = t (1-e^{-\nu}) (1+o(1)),
\end{equation}
as this occurs with probability approaching one as $t \to \infty$ in accordance with \eqref{eq:PY} and standard concentration (e.g., Hoeffding's inequality).  We henceforth condition on any such $T_+ = t_+$, as well as a set $\PE = \peset$ that yields $H_0 = h_0 = o(k)$ and $H_1 = h_1 = o(\sqrt{k})$.

For a given true edge $(i,j)$, let $T_{i,j}$ be the number of tests containing $(i,j)$ and no other edges from PE.  We claim that the distribution of $T_{i,j}$ given $t_+$ and $\peset$ is 
\begin{equation}
    (T_{i,j} \,|\, t_+,\peset) \sim \Binomial\Big( t_+, \frac{q_{i,j}}{q_+} \Big), \label{eq:Tij}
\end{equation}
where $q_{i,j}$ is the conditional probability (given $\PE = \peset$) of a given test including $(i,j)$ and no other pairs from $\PE$, and $q_+ = (1-e^{-\nu}) (1+o(1))$ is the unconditional probability of a positive test.  While the distribution \eqref{eq:Tij} is intuitive, its derivation is somewhat tedious, so it is postponed to the end of this appendix (Section \ref{sec:dd_cond_distr}).

We proceed by lower bounding $q_{i,j}$.  For a given random test, let $A_1$ be the event that the test includes a pair in PE connected to either $i$ or $j$, and let $A_2$ be the event that the test includes a pair in PE connected to neither $i$ nor $j$.  Given that $(i,j)$ is in the test (which occurs with probability $p^2$), $(i,j)$ fails to be the unique PE in the test only if either $A_1$ or $A_2$ occurs, so
\begin{align}
    q_{i,j} 
        &= p^2 \cdot \big( 1 - \PP[ A_1 \cup A_2 \,|\, \peset, \{i,j\} \subseteq \Lc ] \big) \label{eq:qij_step1} \\
        &\ge p^2 \cdot \big( 1 - \PP[ A_1  \,|\, \peset, \{i,j\} \subseteq \Lc ] - \PP[ A_2  \,|\, \peset, \{i,j\} \subseteq \Lc ] \big) \\
        &\ge p^2 \cdot \big( 1 - (2d+h_1)p - \PP[ A_2  \,|\, \peset, \{i,j\} \subseteq \Lc ] \big) \label{eq:qij_step3} \\
        &= p^2 \cdot \big( 1 - o(1) - \PP[ A_2  \,|\, \peset, \{i,j\} \subseteq \Lc ] \big), \label{eq:qij_step4}
\end{align}
where the $(2d+h_1)p $ term in \eqref{eq:qij_step3} arises from at most $2d$ true edges connected to $i$ or $j$ and at most an additional $h_1$ non-edges in $\PE$ connected to $i$ or $j$, and \eqref{eq:qij_step4} follows from the fact that $p = \Theta\big( \frac{1}{\sqrt k} \big)$ along with $d = o(\sqrt{k})$ and $h_1 = o(\sqrt{k})$.

To characterize the probability of $A_2$ in \eqref{eq:qij_step4}, we write $A_2 = A'_2 \cup A''_2$, where $A'_2$ is the event that the test includes a true edge connected to neither $i$ nor $j$, and $A''_2$ is the event that the test includes a non-edge in PE connected to neither $i$ nor $j$.  We have
\begin{align}
    \PP[ A_2  \,|\, \peset, \{i,j\} \subseteq \Lc ]  
        &\le \PP[ A'_2  \,|\, \peset, \{i,j\} \subseteq \Lc ]  + \PP[ A''_2  \,|\, \peset, \{i,j\} \subseteq \Lc ] \\
        &=  \PP[ A'_2 ]  + \PP[ A''_2  \,|\, \peset, \{i,j\} \subseteq \Lc ] \label{eq:A2_step2} \\
        &\le \PP[Y = 1] + h_0 p^2  \label{eq:A2_step3} \\
        &= (1-e^{-\nu}) (1+o(1)),  \label{eq:A2_step4}
\end{align}
where \eqref{eq:A2_step2} uses the fact that $A'_2$ is independent of all events being conditioned on (since $A'_2$ concerns only true edges separate from $\{i,j\}$), the first term in \eqref{eq:A2_step3} uses the fact that the event $A'_2$ implies $Y=1$, the second term in \eqref{eq:A2_step3} uses the fact that there are at most $h_0$ possible pairs each included with probability $p^2$, and \eqref{eq:A2_step4} uses $\PP[Y = 1] = (1-e^{-\nu}) (1+o(1))$ along with $h_0 = o(k)$ and $p^2 = \Theta\big(\frac{1}{k}\big)$.

Substituting \eqref{eq:A2_step4} into \eqref{eq:qij_step4} gives
\begin{align}
    q_{i,j} &\ge p^2 e^{-\nu} (1+o(1)) \\
        &= \frac{\nu e^{-\nu}}{k} (1+o(1)), 
\end{align}
recalling that $p^2 = \frac{\nu}{k} (1+o(1))$.  Returning to \eqref{eq:Tij}, we find that
\begin{equation}
T_{i,j} \sim \Binomial\Big( t_+, \frac{\nu}{k} \cdot \frac{e^{-\nu}}{1-e^{-\nu}} \cdot (1+o(1)) \Big), \label{eq:Tij2}
\end{equation}
and since $t_+ = t(1-e^{-\nu})(1+o(1))$ and a $\Binomial(N,r)$ random variable equals zero with probability $(1-r)^N \le e^{-Nr}$, it follows that
\begin{equation}
    \PP[ T_{i,j} = 0 ] \le \exp\bigg( - \frac{t}{k} \cdot \nu e^{-\nu} \cdot (1+o(1)) \bigg),
\end{equation}
and hence
\begin{equation}
\PP\bigg[ \bigcup_{(i,j) \in E} \{ T_{i,j} = 0 \} \bigg] \le k \exp\bigg( - \frac{t}{k} \cdot \nu e^{-\nu} \cdot (1+o(1)) \bigg).
\end{equation}
Re-arranging, setting $\nu = 1$, and writing $\log k = (2\theta \log n)(1+o(1))$, we find that the second step of DD succeeds as long as
\begin{align}
    t \ge \big( 2\theta e k \log n \big) (1+\eta) \label{eq:dd_t2}
\end{align}
for arbitrarily small $\eta > 0$.

\subsection{Combining and Simplifying} \label{sec:dd_combining}

To complete the proof of Theorem \ref{thm:dd}, we only need to show that given the requirements \eqref{eq:dd_t1_1} and \eqref{eq:dd_t2}, the additional requirement \eqref{eq:dd_t1_2} is redundant (recall also that $k = \kbar(1+o(1))$ for any typical graph $G$).  We handle the three cases separately:
\begin{itemize}[leftmargin=5ex,itemsep=0ex,topsep=0.25ex]
    \item For the first case $\frac{3}{4} \le \theta \le 1$, observe that the coefficient $2 - \theta \le 1.25$ in \eqref{eq:dd_t1_2} is strictly less than the coefficient $2\theta \ge 1.5$ in \eqref{eq:dd_t2}.
    \item For the second case $\frac{1}{2} < \theta < \frac{3}{4}$, observe that the coefficient $3\theta - 1 < 2\theta - 0.25$ in \eqref{eq:dd_t1_2} is strictly less than the coefficient $2\theta$ in \eqref{eq:dd_t2}. 
    \item For the third case $0 < \theta \le \frac{1}{2}$, observe that the coefficient $\theta \le \frac{1}{2}$ in \eqref{eq:dd_t1_2} is strictly less than the coefficient $2(1-\theta) \ge 1$ in \eqref{eq:dd_t1_1}.
\end{itemize}

\subsection{Derivation of the Conditional Distribution \eqref{eq:Tij}} \label{sec:dd_cond_distr}

The derivation of \eqref{eq:Tij} is based on multinomial conditioning, and bears similarity to an analogous conditional distribution for standard group testing \cite[Sec.~A.3]{Ald14a}.  To derive the conditional distribution given $t_+$ and $\peset$, we first need to consider certain unconditional distributions (though still with implicit conditioning on a given typical graph $G$).  We define the following random variables:
\begin{itemize}
    \item $T_-$ is the number of negative tests, $\Ttil_{i,j}$ is the number of tests covering a given true edge $(i,j) \in E$ but no other true edges, and $\Ttilmul$ is the number of tests covering two or more true edges.
    \item $T_{i,j}$ is the number of tests covering a given true edge $(i,j) \in E$ and no other pairs from $\PE$.
\end{itemize}
Since the tests are independent, $(T_-, \{\Ttil_{i,j}\}_{(i,j) \in E},\Ttilmul)$ has a multinomial distribution with $t$ trials; the corresponding probability parameters are denoted by $(q_-, \{\qtil_{i,j}\}_{(i,j) \in E},\qtilmul)$.

We now consider conditioning on $T_- = t_-$ and $\PE = \peset$.  Under such conditioning, we can characterize the joint distribution of $(\{T_{i,j}\}_{(i,j) \in E}, \{\Ttil_{i,j} - T_{i,j}\}_{(i,j) \in E},\Ttilmul)$ via the following lemma from \cite{Ald14a}, stated in generic notation.

\begin{lemma} \label{lem:multi_cond}
    {\em \cite[Lemma C.1]{Ald14a}}
    Fix the integers $\ell$ and $m$, and let $(W_0,\{W_i\}_{i=1}^{\ell},W_{\ell+1})$ have a multinomial distribution with $m$ trials and probabilities $(r_0,\{r_i\}_{i=1}^{\ell},r')$.  Associate an observation $(W_0,\{W_i\}_{i=1}^{\ell},W') = (w_0,\{w_i\}_{i=1}^{\ell},w')$ with an unordered list of $m$ class labels (class $0$, class $i=1,\dotsc,\ell$, or class $\ell+1$), and suppose that each label in class $i=1\dotsc,m$ is independently changed to some class $i'$ with probability $\gamma_i \in [0,1]$, and to some class $i''$ with probability $1-\gamma_i$ (where $\gamma_i$ may depend on $w_0$).  Then, conditioned on $W_0 = w_0$, the corresponding random variables $(\{W'_i\}_{i=1}^{\ell},\{W''_i\}_{i=1}^{\ell},W_{\ell+1})$ counting the transformed class labels have a multinomial distribution with $m-w_0$ trials and the following probability parameters:
    \begin{equation}
        \bigg( \Big\{\frac{r_i \gamma_i}{1-r_0}\Big\}_{i=1}^{\ell}, \Big\{\frac{r_i (1-\gamma_i)}{1-r_0}\Big\}_{i=1}^{\ell}, \frac{r'}{1-r_0} \bigg).
    \end{equation}
\end{lemma}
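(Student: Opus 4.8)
The plan is to reduce the claim to two elementary and standard properties of the multinomial distribution, composed together, with the conditioning on $W_0 = w_0$ playing the essential role of turning the (possibly $w_0$-dependent) thinning probabilities $\gamma_i$ into fixed constants. Specifically, I would combine the fact that conditioning a multinomial vector on one of its coordinates renormalizes the probabilities of the remaining coordinates with the fact that independently splitting each category of a multinomial into two sub-categories again yields a multinomial. Since the statement is quoted verbatim from \cite[Lemma C.1]{Ald14a}, one could alternatively just cite it; the self-contained argument below is for completeness.

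First I would condition on $W_0 = w_0$. By the standard conditioning property of the multinomial law, for $(W_0,\{W_i\}_{i=1}^{\ell},W_{\ell+1})$ multinomial with $m$ trials and probabilities $(r_0,\{r_i\}_{i=1}^{\ell},r')$, the conditional law of $(\{W_i\}_{i=1}^{\ell},W_{\ell+1})$ given $W_0 = w_0$ is multinomial with $m - w_0$ trials and parameters $(\{\frac{r_i}{1-r_0}\}_{i=1}^{\ell},\frac{r'}{1-r_0})$. This follows at once by dividing the full multinomial PMF by the binomial marginal $\binom{m}{w_0} r_0^{w_0}(1-r_0)^{m-w_0}$ and noting the clean cancellation of the factorial and power factors involving $r_0$ and $w_0$. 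Crucially, once $w_0$ is fixed, each $\gamma_i = \gamma_i(w_0)$ is now a deterministic constant.

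Next I would apply the splitting step to the conditional vector from the previous step: each of the $W_i$ labels in category $i \in \{1,\dots,\ell\}$ is independently relabeled $i'$ with probability $\gamma_i$ and $i''$ with probability $1-\gamma_i$, while category $\ell+1$ is left untouched. I would establish the general splitting fact by a one-line PMF computation: writing $W_i = a_i + b_i$ and multiplying the conditional multinomial PMF of $(\{W_i\},W_{\ell+1})$ by the product of binomial splitting probabilities $\prod_i \binom{a_i+b_i}{a_i}\gamma_i^{a_i}(1-\gamma_i)^{b_i}$, the factors $(a_i+b_i)!$ cancel and one is left with exactly the multinomial PMF on $m-w_0$ trials with parameters $(\{\frac{r_i\gamma_i}{1-r_0}\}_{i=1}^{\ell}, \{\frac{r_i(1-\gamma_i)}{1-r_0}\}_{i=1}^{\ell}, \frac{r'}{1-r_0})$, which is the claim.

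The only subtlety to flag is the dependence of the $\gamma_i$ on $w_0$: one cannot assert that the refined vector $(W_0,\{W'_i\},\{W''_i\},W_{\ell+1})$ is \emph{unconditionally} multinomial, because the splitting probabilities themselves vary with the realized value of $W_0$. Conditioning on $W_0 = w_0$ at the outset is precisely what removes this difficulty, after which the two steps above are routine factorial bookkeeping. I therefore expect no genuine obstacle beyond the care needed to carry out the conditioning before invoking the splitting identity.
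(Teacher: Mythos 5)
Your proposal is correct, and it supplies something the paper itself does not: the paper never proves this lemma, but simply imports it verbatim from the cited reference \cite[Lemma C.1]{Ald14a} and applies it in the DD analysis. Your two-step decomposition is the standard and sound way to verify it. Step one (dividing the joint multinomial PMF by the $\Bi(m,r_0)$ marginal of $W_0$ to get a multinomial on $m-w_0$ trials with probabilities renormalized by $1-r_0$) and step two (multiplying the conditional PMF by the product of binomial splitting factors $\prod_i \binom{a_i+b_i}{a_i}\gamma_i^{a_i}(1-\gamma_i)^{b_i}$, with the $w_i!$ factors cancelling) both check out as routine factorial bookkeeping, and their composition gives exactly the stated parameters $\big(\{\frac{r_i\gamma_i}{1-r_0}\},\{\frac{r_i(1-\gamma_i)}{1-r_0}\},\frac{r'}{1-r_0}\big)$. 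The subtlety you flag is indeed the essential content of the lemma as stated: because $\gamma_i$ may depend on $w_0$ (in the paper's application, the set $\PE$ is determined by the negative tests, so the relabeling probabilities $\gamma_{i,j}$ depend on the conditioning event), the conditioning must be performed \emph{before} invoking the splitting identity, and no unconditional multinomial claim is available. This is precisely why the conclusion is phrased conditionally on $W_0 = w_0$, and your proof respects that ordering, so there is no gap.
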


% Both of these are determined entirely by the negative tests, meaning that the positive tests remain conditionally independent, but with a different distribution.  The multinomial probabilities are all divided by $q_-$ due to the conditioning on $T_- = t_-$ \cite[Lemma C.1 part 3]{Ald14a}, and each probability-$\qtil_{i,j}$ event (corresponding to adding one to $\Ttil_{i,j}$) is ``kept'' (contributing to $T_{i,j}$) independently with some conditional probability $\gamma_{i,j}$, and ``not kept'' (contributing to $\Ttil_{i,j} - T_{i,j}$) with probability $1-\gamma_{i,j}$.  Specifically, $\gamma_{i,j}$ is the conditional probability that some non-edge in $\peset$ is covered by the test given that $(i,j)$ is the unique true edge covered.  This ``splitting'' of the $\Ttil_{i,j}$ random variables again leads to to a multinomial distribution \cite[Lemma C.1 part 4]{Ald14a}: G

To apply this result, we associate $(T_-, \{\Ttil_{i,j}\}_{(i,j) \in E},\Ttilmul)$ with $(W_0,\{W_i\}_{i=1}^{\ell},W_{\ell+1})$, and associate  $(\{T_{i,j}\}_{(i,j) \in E}, \{\Ttil_{i,j} - T_{i,j}\}_{(i,j) \in E},\Ttilmul)$ with $(\{W'_i\}_{i=1}^{\ell},\{W''_i\}_{i=1}^{\ell},W_{\ell+1})$.  Conditioning on $T_- = t_-$ amounts to conditioning on $W_0$, and conditioning on $\PE = \peset$ only amounts to changing the value of $\gamma_i$, since $\PE$ is determined entirely by the negative tests.  Notice that any test contributing (i.e., adding one) to $\Ttil_{i,j}$ further contributes to $T_{i,j}$ independently with probability  $\gamma_{i,j}$, defined to be the conditional probability that some non-edge in $\peset$ is covered by the test given that $(i,j)$ is the unique true edge covered.  Hence, given $T_- = t_-$ and $\PE = \peset$, Lemma \ref{lem:multi_cond} implies that the random variables $(\{T_{i,j}\}_{(i,j) \in E}, \{\Ttil_{i,j} - T_{i,j}\}_{(i,j) \in E},\Ttilmul)$ have a multinomial distribution with $t_+ = t - t_-$ trials and the following probability parameters:
\begin{itemize}[leftmargin=5ex,itemsep=0ex,topsep=0.25ex]
    \item For $(i,j) \in E$, the parameter for $T_{i,j}$ is $\frac{ \qtil_{i,j} \gamma_{i,j} }{ 1 - q_- }$;
    \item For $(i,j) \in E$, the parameter for $\Ttil_{i,j} - T_{i,j}$ is $\frac{ \qtil_{i,j} (1-\gamma_{i,j}) }{ 1 - q_- }$;
    \item The parameter for $\Ttilmul$ is $\frac{ \qtilmul }{ 1 - q_- }$.
\end{itemize}
We conclude by showing that \eqref{eq:Tij} follows from the first of these dot points, with the marginal distribution of a multinomial distribution being binomial.  The denominator $1-q_-$ is trivially equal to $q_+$, and the numerator $\qtil_{i,j} \gamma_{i,j}$ equals the product of two terms. To understand these terms, let $\Btil_{i,j}$ be the event that a given test covers $(i,j)$ but no other true edge, and let $B_{i,j}$ be the event that it covers $(i,j)$ but no other pair from $\PE$.  Then, the previous definitions can be written as
\begin{gather}
    \qtil_{i,j} = \PP[ \Btil_{i,j} ], \quad \gamma_{i,j} = \PP[ B_{i,j} \,|\, \peset, \Btil_{i,j} ]. 
\end{gather}
In addition, we have $\PP[ \Btil_{i,j} ] = \PP[ \Btil_{i,j} \,|\, \peset ]$, since $\Btil_{i,j}$ is independent of $\PE$ (note that $\PE$ is determined entirely by the negative tests, and $\Btil_{i,j}$ only concerns true edges).   As a result, we have
\begin{align}
    \qtil_{i,j}\gamma_{i,j}    
        &= \PP[ \Btil_{i,j} \,|\, \peset ]\PP[ B_{i,j} \,|\, \peset, \Btil_{i,j} ] \\
        &= \PP[ \Btil_{i,j} \cap B_{i,j} \,|\, \peset ] \\
        &= \PP[ B_{i,j} \,|\, \peset ], \label{eq:q_gamma}
\end{align}
where \eqref{eq:q_gamma} follows since $B_{i,j}$ implies $\Btil_{i,j}$, because all true edges are in $\PE$ with probability one (i.e., the first step of DD has no false negatives).  Finally, \eqref{eq:q_gamma} coincides precisely with the definition of $q_{i,j}$ stated following \eqref{eq:Tij}, and this completes the derivation of \eqref{eq:Tij}.

\section{Proof of Theorem \ref{thm:sss} (SSS Lower Bound)} \label{sec:pf_sss}

Since the random graph is in the typical set \eqref{eq:typical} with probability approaching one, it suffices to establish that the number of tests \eqref{eq:sss} yields error probability tending to one conditioned on an arbitrary typical graph $G \in \Tc_n(\epsilon_n)$, with $\epsilon_n = o(1)$ due to Lemma \ref{lem:typical}.  We implicitly condition on such a graph $G$ throughout the analysis.

Let $M_{ij}$ be the event that edge $(i,j)$ is {\em masked}, i.e., whenever its nodes both appear in a test, the nodes of some different edge are also included in the test.  In this case, there exists a satisfying set (of edges) of cardinality $k-1$, so the algorithm will fail to output the true edge set.  Hence,
\begin{align}
    \pe &\ge \PP\bigg[ \bigcup_{(i,j) \in E} M_{ij} \bigg] \\
        &\ge \sum_{(i,j) \in E} \frac{ \PP[M_{ij}]^2 }{ \sum_{(i',j') \in E} \PP[M_{ij} \cap M_{i'j'}]  }, \label{eq:deCaen2}
\end{align}
where \eqref{eq:deCaen2} is an application of de Caen's bound \cite{Dec97}.

We proceed by bounding the individual and pairwise masking probabilities.  For a given edge $(i,j)$ to be masked, for each of the $t$ tests we need either $i$ or $j$ to be excluded, or for the nodes of some other edge to be included.  Letting $A^{(ij)}_1$ be the event that some other node connected to $i$ or $j$ is included, and $A^{(ij)}_2$ the event that two connected nodes distinct from $i$ and $j$ are included, the associated masking event for a single test has probability
\begin{eqnarray}
    p_1^{(ij)} = (1-p^2) + p^2 \PP[ A^{(ij)}_1 \cup A^{(ij)}_2 \,|\, \{i,j\} \subseteq \Lc ]. \label{eq:pij}
\end{eqnarray}
We lower bound $p_1^{(ij)}$ by ignoring the event $A^{(ij)}_1$:
\begin{align}
    p_1^{(ij)} 
        &\ge (1-p^2) + p^2 \PP[ A^{(ij)}_2 \,|\, \{i,j\} \subseteq \Lc ] \\
        &= 1 - p^2 + p^2 \PP[ A^{(ij)}_2 ], \label{eq:pij_lower}
\end{align}
since $A^{(ij)}_2$ is independent of whether $\{i,j\} \subseteq \Lc$.  Now observe that the unconditional probability of a positive test satisfies
\begin{align}
    \PP[Y=1] 
        &= \PP\big[ \{\{i,j\} \subseteq \Lc\} \cup A^{(ij)}_1 \cup A^{(ij)}_2 \big] \label{eq:PY1} \\
        &\le \PP\big[ \{i,j\} \subseteq \Lc \big] + \PP\big[ A^{(ij)}_1 \big] + \PP\big[ A^{(ij)}_2 \big] \\
        &\le p^2 + 2dp^2 + \PP\big[ A^{(ij)}_2 \big], 
\end{align}
and hence
\begin{equation}
    \PP\big[ A^{(ij)}_2 \big] \ge P_Y(1) - \xi, \label{eq:xi}
\end{equation}
where $P_Y(1)$ is a shorthand for $\PP[Y=1]$, and $\xi = (1+2d)p^2$.  Substitution into \eqref{eq:pij_lower} gives 
\begin{equation}
    p_1^{(ij)} \ge 1 - p^2 (1 - P_Y(1) + \xi).
\end{equation}

Next, we upper bound $p_1^{(ij)}$.  Applying the union bound in \eqref{eq:pij}, we obtain 
\begin{align}
    p_1^{(ij)} 
        &\le (1-p^2) + p^2 \big( \PP[ A^{(ij)}_1 \,|\, \{i,j\} \subseteq \Lc ] + \PP [A^{(ij)}_2 \,|\, \{i,j\} \subseteq \Lc ] \big) \\
        &\le (1-p^2) + p^2 \big( 2dp + \PP [A^{(ij)}_2 ] \big) \label{eq:pij_upper2} \\
        &\le 1 - p^2 \big( 1 - P_Y(1) - \xi'\big) \label{eq:pij_upper3} 
\end{align}
where \eqref{eq:pij_upper2} uses $\PP[ A^{(ij)}_1 \,|\, \{i,j\} \subseteq \Lc ] \le 2dp$ and the fact that $A^{(ij)}_2$ is independent of whether $\{i,j\} \subseteq \Lc$, and \eqref{eq:pij_upper3} uses $\PP [A^{(ij)}_2 ] \le P_Y(1)$ (see \eqref{eq:PY1}) and defines $\xi' = 2dp$.  

Now, for the masking event $M_{ij}$ to occur, the probability-$p_1^{(ij)}$ masking event needs to occur for all tests, yielding $\PP[M_{ij}] = (p_1^{(ij)})^t$.  Moreover, for both $M_{ij}$ and $M_{i'j'}$ to occur, the case $(i,j) = (i',j')$ is handled trivially, whereas for $(i,j) \ne (i',j')$ the associated events for $(i,j)$ and $(i',j')$ need to occur simultaneously for each test.  Since the complementary event (i.e., the edge is the only one covered by the nodes included in the test) can only occur for one of $(i,j)$ or $(i',j')$, the associated probability $p_1^{(ij \,\cap\, i'j')}$ of both masking events occurring for a single test satisfies
\begin{equation}
    1 - p_1^{(ij \,\cap\, i'j')} = (1-p_1^{(ij)}) +  (1-p_1^{(i'j')}),
\end{equation}
i.e., $\PP[A \cup  B] = \PP[A] + \PP[B]$ for disjoint events $A$ and $B$.  Hence, from \eqref{eq:pij_upper3},
\begin{equation}
    p_1^{(ij \,\cap\, i'j')} \le 1 - 2 p^2 \big( 1 - P_Y(1) - \xi'\big).
\end{equation}
Taking the intersection over the $t$ tests gives $\PP[M_{ij} \cap M_{i'j'}] = (p_1^{(ij \,\cap\, i'j')})^t$ for all $(i,j) \ne (i',j')$, and substituting the preceding findings into \eqref{eq:deCaen2} gives 
\begin{align}
    \PP[{\rm error}] 
        &\ge \sum_{(i,j) \in E} \frac{ \big( p_1^{ij} \big)^{2t} }{ \big( p_1^{ij} \big)^{t} + \sum_{(i',j') \ne (i,j)} \big(p_1^{(ij \,\cap\, i'j')} \big)^t  } \\
        &\ge \sum_{(i,j) \in E} \frac{ \big( 1 - p^2 (1 - P_Y(1) + \xi) \big)^{2t} }{ \big( 1 - p^2 ( 1 - P_Y(1) - \xi') \big)^{t} + \sum_{(i',j') \ne (i,j)} \big(1 - 2 p^2 ( 1 - P_Y(1) - \xi') \big)^t  } \\
        &\ge \frac{ k \big( 1 - p^2 (1 - P_Y(1) + \xi) \big)^{2t} }{ \big( 1 - p^2 ( 1 - P_Y(1) - \xi') \big)^{t} + k \big(1 - 2 p^2 ( 1 - P_Y(1) - \xi') \big)^t  }, \label{eq:post_dec}
\end{align}
since $|E| = k$ (in the denominator, we upper bound $k-1 \le k$).

We upper bound the terms in the denominator in \eqref{eq:post_dec} using $1-\alpha \le e^{-\alpha}$, and characterize the numerator using $1-\alpha = e^{-\alpha + O(\alpha^2)}$ as $\alpha \to 0$ (recall that $p^2 = \Theta\big(\frac{1}{k}\big) = o(1)$ and $P_Y(1) = \Theta(1)$):
\begin{align}
    \PP[{\rm error}] 
        &\ge \frac{ k e^{- 2t \big( p^2 (1 - P_Y(1) + \xi) + O(p^4) \big) } }{ e^{- t p^2 ( 1 - P_Y(1) - \xi')} + k e^{- 2 t p^2 ( 1 - P_Y(1) - \xi')}  } \\
        &\ge \frac{ k e^{- t  p^2 (1 - P_Y(1)) } }{ 1 + k e^{- t p^2 ( 1 - P_Y(1))}  } \cdot \frac{e^{-2t(p^2 \xi + O(p^4))}}{ e^{2tp^2 \xi'} }. \label{eq:Perr_lb_2}
\end{align}
Since the converse bound we are proving is of the form $t = \Omega(k \log n)$, we can assume without loss of generality that $t = \Theta(k \log n)$, as additional tests can only help the SSS algorithm.\footnote{If an incorrect set is satisfying with respect to a certain number of tests, it remains satisfying after removing any subset of those tests.  Hence, removing tests cannot decrease the error probability.}  In addition, we can assume without loss of generality that $p^2 = \Theta\big(\frac{1}{k}\big)$, since if $p^2$ behaves as $o\big(\frac{1}{k}\big)$ or $\omega\big(\frac{1}{k}\big)$ then the probability of a positive test tends to $0$ or $1$ as $n \to \infty$, and it follows from a standard entropy-based argument that  $\omega(k \log n)$ tests are needed \cite[Lemma 1]{Ald15}.   We claim that these conditions imply that
\begin{equation}
    \frac{e^{-2t(p^2 \xi + O(p^4))}}{ e^{2tp^2 \xi'} } \to 1. \label{eq:lim1}
\end{equation}
This is seen by noting that $tp^2 = \Theta( \log n )$ by the above-mentioned behavior of $t$ and $p^2$, whereas the terms $\xi = (1+2d)p^2$, $\xi' = 2dp$, and $O(t p^4)$ all behave as $O(n^{-c})$ for sufficiently small $c$.  This behavior is easy to see for the $O(t p^4)$ term by the above-mentioned behavior of $t$ and $p^2$, and is seen to also hold for $\xi$ and $\xi'$ by noting that $dp = \Theta\big(\frac{d}{\sqrt k}\big)$, along with $d = O( \max\{ \log n, nq \} )$ (see \eqref{eq:d_bound}), $\sqrt{k} = \Theta( n\sqrt{q} )$, and the behavior of $q$ in \eqref{eq:q_bounds}.

Substituting \eqref{eq:lim1} into \eqref{eq:Perr_lb_2}, we have
\begin{equation}
    \PP[{\rm error}] \ge \frac{1}{1 + ke^{tp^2(1-P_Y(1))}} (1+o(1)),
\end{equation}
and substituting $p^2 = \frac{\nu}{k}$ and $1- P_Y(1) = e^{-\nu} (1+o(1))$, we deduce that $\PP[{\rm error}] \to 1$ whenever
\begin{equation}
    t \le \frac{k \log k}{\nu e^{-\nu}} (1 - \eta)
\end{equation}
for arbitrarily small $\eta > 0$.  Since the function $\nu e^{-\nu}$ is maximized at $\nu = 1$, we deduce that $\PP[{\rm error}] \to 1$ whenever
\begin{equation}
    t \le \big( k e \log k \big) (1 - \eta).
\end{equation}
The proof is completed by recalling that for any typical graph, $k = \kbar(1+o(1))$ and $\log k = (2\theta \log n)(1+o(1))$ (since $\kbar = \Theta(n^{2\theta})$).

\section{Missing Details in the Proof of Theorem \ref{thm:grotesque} (Sublinear-Time Decoding)} \label{sec:bundles}

\subsection{Details of Step 1 -- Bundles of Tests}

% Here we provide the omitted details from Section \ref{sec:bundles_main}.  
Recall that we form a number $B$ of ``bundles'' of tests, where each node is placed in each bundle with probability $r \in (0,1)$.
For a given bundle, consider the probability $p_{\rm one}$ of a given edge $(i,j)$ being the only edge among its nodes.  Letting $A_1$ be the event that some other node connected to $i$ or $j$ is in the bundle, and letting $A_2$ be the event that two different edge-connected nodes are in the bundle, we have
\begin{align}
    p_{\rm one} 
        &= r^2 \cdot \PP[A_1^c \cap A_2^c] \\
        &\ge r^2 \cdot\big( 1 - \PP[A_1] - \PP[A_2] \big) \\
        &\ge r^2 \cdot\big( 1 - 2dr - kr^2 \big), \label{eq:pone3}
\end{align}
where \eqref{eq:pone3} uses the fact that there are at most $2d$ nodes connected to $i$ or $j$, and at most $k$ other edges separate from $i$ and $j$.  Setting $r = \frac{1}{\sqrt{2\kbar}}$ gives
\begin{align}
    p_{\rm one}
        &\ge \frac{1}{2\kbar} \Big( 1 - \frac{2d}{\sqrt{2\kbar}} - \frac{1}{2}\cdot\frac{k}{\kbar} \Big) \\
        &= \frac{1}{4\kbar} (1+o(1)),
\end{align}
since $d \ll \sqrt{k}$ (see \eqref{eq:d_vs_k}) and $k = \kbar(1+o(1))$.  Hence, the probability of $(i,j)$ being the unique edge in {\em some} bundle satisfies
\begin{align*}
    p_{\rm any} 
        &= 1 - \bigg( 1 - \frac{1}{4\kbar} (1+o(1)) \bigg)^B \\
        &\ge 1 - e^{- \frac{B}{4\kbar} (1+o(1))}, 
\end{align*}
and by a union bound over the $k = \kbar(1+o(1))$ edges, we find that $B = \big(4\kbar \log \kbar\big) (1+o(1))$ bundles suffice to ensure that every edge is the unique one in at least one bundle.

\subsection{Details of Step 4 -- Total Number of Tests and Decoding Time}

The number of tests used is asymptotically dominated by that of the location tests, and recalling that $B = \big(4\kbar \log \kbar\big) (1+o(1))$, we find that $t = 4\kbar (\log \kbar)(\log_2 n)^2 (1+o(1))$.  We briefly mention that this can be significantly reduced when adaptivity is allowed, similarly to standard group testing \cite{Cai13}, but our focus in this paper is on the non-adaptive setting.

For the decoding time, we notice that each multiplicity test takes $O(\log B)$ time (i.e., the same as the number of tests used), whereas for the location test we can actually make the decoding time less than the number of tests due to the fact that we don't end up making use of most test outcomes.\footnote{We still need to perform such tests, because {\em a priori} we don't know which $O(L)$ size subset of the $O(L^2)$ relevant tests performed will end up being useful. }  Specifically, each iteration from $\ell = 1,\dotsc,L$ observes at most $3$ test outcomes and runs in $O(1)$ time, so the overall time per location test is $O(L)$.  Since the decoder makes use of $B$ multiplicity tests and $k$ location tests (assuming no errors occur), the total runtime is $O(B \log B + O(k L))$, which simplifies to $O( \kbar \log^2 \kbar + \kbar \log n )$.

\section{Additional Numerical Experiments} \label{app:more_exp}

In order to demonstrate that the empirical performance of our algorithms is in agreement with our theory, we plot the success probability as a function of the number of tests for various $(\kbar,n)$ pairs, and then re-plot them with the number of tests normalized by $\overline{k}\log\frac{1}{q}$ (e.g., see Theorem \ref{thm:conv}; similar normalization is also used in Figure \ref{fig:rates}).  The results, averaged over $1000$ trials, are shown in Figure \ref{fig:more_exp}.

\begin{figure}
    \begin{centering}
        {% This file was created by matlab2tikz.
%
%The latest updates can be retrieved from
%  http://www.mathworks.com/matlabcentral/fileexchange/22022-matlab2tikz-matlab2tikz
%where you can also make suggestions and rate matlab2tikz.
%
\definecolor{dark green}{rgb}{0,0.7,0}%
\begin{tikzpicture}

\begin{axis}[%
width=5.2cm,
height=4cm,
scale only axis,
xmin=25,
xmax=800,
xtick={  100,  200,  300,  400,  500,  600,  700, 800},
xlabel style={font=\color{white!15!black}},
xlabel={Number of tests},
ymin=0,
ymax=1,
ytick={  0, 0.2, 0.4, 0.6, 0.8,   1},
ylabel style={font=\color{white!15!black}},
ylabel={Success probability},
axis background/.style={fill=white},
%axis x line*=bottom,
%axis y line*=left,
xmajorgrids,
ymajorgrids,
legend style={at={(0.99,0.4)}, anchor=south east, legend cell align=left, align=left, draw=white!15!black, font= \scriptsize,inner xsep=1pt, inner ysep=1pt}
]
\addplot [color=red, line width=1pt]
  table[row sep=crcr]{%
25	0.008\\
50	0.035\\
75	0.164\\
100	0.373\\
125	0.549\\
150	0.688\\
175	0.774\\
200	0.858\\
225	0.894\\
250	0.92\\
275	0.943\\
300	0.967\\
325	0.976\\
350	0.986\\
375	0.982\\
400	0.986\\
425	0.996\\
450	0.997\\
475	0.995\\
500	0.996\\
525	0.998\\
550	0.998\\
575	1\\
600	0.999\\
625	1\\
650	0.999\\
675	0.999\\
700	1\\
725	0.999\\
750	1\\
775	1\\
800	0.999\\
};
\addlegendentry{LP}

\addplot [color=blue, line width=1pt]
  table[row sep=crcr]{%
25	0.001\\
50	0.002\\
75	0.047\\
100	0.16\\
125	0.31\\
150	0.448\\
175	0.589\\
200	0.694\\
225	0.755\\
250	0.838\\
275	0.877\\
300	0.898\\
325	0.924\\
350	0.957\\
375	0.952\\
400	0.97\\
425	0.972\\
450	0.987\\
475	0.986\\
500	0.991\\
525	0.989\\
550	0.994\\
575	0.991\\
600	0.995\\
625	1\\
650	0.999\\
675	0.996\\
700	1\\
725	0.999\\
750	0.998\\
775	0.998\\
800	0.999\\
};
\addlegendentry{DD}

\addplot [color=dark green, line width=1pt]
  table[row sep=crcr]{%
25	0\\
50	0\\
75	0.003\\
100	0.032\\
125	0.09\\
150	0.17\\
175	0.276\\
200	0.355\\
225	0.45\\
250	0.57\\
275	0.63\\
300	0.691\\
325	0.737\\
350	0.791\\
375	0.816\\
400	0.849\\
425	0.868\\
450	0.909\\
475	0.916\\
500	0.932\\
525	0.937\\
550	0.957\\
575	0.96\\
600	0.969\\
625	0.976\\
650	0.978\\
675	0.977\\
700	0.98\\
725	0.987\\
750	0.984\\
775	0.986\\
800	0.983\\
};
\addlegendentry{COMP}

\addplot [color=red, dash pattern= on 5pt off 4pt, line width=1pt]
  table[row sep=crcr]{%
25	0\\
50	0.005\\
75	0.052\\
100	0.174\\
125	0.299\\
150	0.496\\
175	0.64\\
200	0.72\\
225	0.809\\
250	0.872\\
275	0.926\\
300	0.954\\
325	0.949\\
350	0.969\\
375	0.978\\
400	0.981\\
425	0.987\\
450	0.985\\
475	0.995\\
500	0.996\\
525	0.998\\
550	0.999\\
575	0.998\\
600	0.998\\
625	0.998\\
650	1\\
675	1\\
700	1\\
725	1\\
750	1\\
775	1\\
800	1\\
};
\addplot [color=red, dash pattern=on 2pt off 3pt on 5pt off 3pt, line width=1pt]
  table[row sep=crcr]{%
25	0\\
50	0.002\\
75	0.011\\
100	0.056\\
125	0.147\\
150	0.282\\
175	0.429\\
200	0.572\\
225	0.688\\
250	0.777\\
275	0.83\\
300	0.887\\
325	0.924\\
350	0.937\\
375	0.96\\
400	0.956\\
425	0.97\\
450	0.983\\
475	0.988\\
500	0.988\\
525	0.995\\
550	0.997\\
575	0.994\\
600	0.998\\
625	1\\
650	0.996\\
675	0.999\\
700	0.999\\
725	0.999\\
750	0.999\\
775	0.998\\
800	1\\
};
\addplot [color=red, dash pattern=on 2pt off 3pt, line width=1pt]
  table[row sep=crcr]{%
25	0\\
50	0\\
75	0.001\\
100	0.007\\
125	0.046\\
150	0.143\\
175	0.258\\
200	0.4\\
225	0.532\\
250	0.584\\
275	0.722\\
300	0.78\\
325	0.824\\
350	0.879\\
375	0.924\\
400	0.934\\
425	0.961\\
450	0.972\\
475	0.977\\
500	0.979\\
525	0.988\\
550	0.993\\
575	0.994\\
600	0.993\\
625	0.992\\
650	0.996\\
675	0.996\\
700	0.995\\
725	0.998\\
750	1\\
775	0.999\\
800	0.999\\
};
\addplot [color=blue, dash pattern= on 5pt off 4pt, line width=1pt]
  table[row sep=crcr]{%
25	0\\
50	0\\
75	0.007\\
100	0.034\\
125	0.084\\
150	0.232\\
175	0.379\\
200	0.496\\
225	0.593\\
250	0.677\\
275	0.788\\
300	0.821\\
325	0.869\\
350	0.915\\
375	0.936\\
400	0.951\\
425	0.963\\
450	0.96\\
475	0.975\\
500	0.984\\
525	0.989\\
550	0.991\\
575	0.989\\
600	0.989\\
625	0.995\\
650	0.996\\
675	0.997\\
700	0.998\\
725	0.999\\
750	0.998\\
775	0.999\\
800	0.998\\
};
\addplot [color=blue, dash pattern=on 2pt off 3pt on 5pt off 3pt, line width=1pt]
  table[row sep=crcr]{%
25	0\\
50	0\\
75	0\\
100	0.003\\
125	0.023\\
150	0.062\\
175	0.171\\
200	0.305\\
225	0.426\\
250	0.517\\
275	0.614\\
300	0.705\\
325	0.798\\
350	0.828\\
375	0.858\\
400	0.893\\
425	0.922\\
450	0.953\\
475	0.958\\
500	0.95\\
525	0.984\\
550	0.982\\
575	0.981\\
600	0.992\\
625	0.987\\
650	0.995\\
675	0.994\\
700	0.993\\
725	0.998\\
750	0.999\\
775	0.997\\
800	0.997\\
};
\addplot [color=blue, dash pattern=on 2pt off 3pt, line width=1pt]
  table[row sep=crcr]{%
25	0\\
50	0\\
75	0\\
100	0\\
125	0\\
150	0.018\\
175	0.06\\
200	0.115\\
225	0.222\\
250	0.288\\
275	0.43\\
300	0.523\\
325	0.603\\
350	0.707\\
375	0.757\\
400	0.821\\
425	0.861\\
450	0.892\\
475	0.905\\
500	0.925\\
525	0.951\\
550	0.969\\
575	0.976\\
600	0.98\\
625	0.984\\
650	0.98\\
675	0.995\\
700	0.994\\
725	0.994\\
750	0.997\\
775	0.998\\
800	0.998\\
};
\addplot [color=dark green, dash pattern= on 5pt off 4pt, line width=1pt]
  table[row sep=crcr]{%
25	0\\
50	0\\
75	0\\
100	0.003\\
125	0.007\\
150	0.054\\
175	0.116\\
200	0.165\\
225	0.261\\
250	0.298\\
275	0.446\\
300	0.499\\
325	0.574\\
350	0.664\\
375	0.69\\
400	0.729\\
425	0.783\\
450	0.791\\
475	0.842\\
500	0.881\\
525	0.898\\
550	0.923\\
575	0.916\\
600	0.916\\
625	0.949\\
650	0.959\\
675	0.95\\
700	0.961\\
725	0.971\\
750	0.974\\
775	0.974\\
800	0.978\\
};
\addplot [color=dark green, dash pattern=on 2pt off 3pt on 5pt off 3pt, line width=1pt]
  table[row sep=crcr]{%
25	0\\
50	0\\
75	0\\
100	0\\
125	0\\
150	0.006\\
175	0.029\\
200	0.044\\
225	0.094\\
250	0.124\\
275	0.235\\
300	0.31\\
325	0.392\\
350	0.446\\
375	0.487\\
400	0.561\\
425	0.644\\
450	0.711\\
475	0.755\\
500	0.761\\
525	0.807\\
550	0.849\\
575	0.864\\
600	0.907\\
625	0.895\\
650	0.92\\
675	0.934\\
700	0.933\\
725	0.936\\
750	0.955\\
775	0.964\\
800	0.972\\
};
\addplot [color=dark green, dash pattern=on 2pt off 3pt, line width=1pt]
  table[row sep=crcr]{%
25	0\\
50	0\\
75	0\\
100	0\\
125	0\\
150	0\\
175	0.002\\
200	0.005\\
225	0.022\\
250	0.042\\
275	0.074\\
300	0.122\\
325	0.181\\
350	0.293\\
375	0.333\\
400	0.403\\
425	0.475\\
450	0.512\\
475	0.581\\
500	0.627\\
525	0.692\\
550	0.753\\
575	0.772\\
600	0.797\\
625	0.828\\
650	0.874\\
675	0.887\\
700	0.901\\
725	0.911\\
750	0.927\\
775	0.935\\
800	0.938\\
};
\end{axis}

\begin{axis}[%
width=5.2cm,
height=4cm,
scale only axis,
xmin=25,
xmax=800,
xtick={\empty},
ymin=0,
ymax=1,
ytick={  \empty},
axis line style={draw=none},
legend style={at={(0.99,0.015)}, anchor=south east, legend cell align=left, align=left, draw=white!15!black, fill=white, font= \scriptsize,inner xsep=1pt, inner ysep=1pt}
]
\addplot [color=black, line width=1pt]
  table[row sep=crcr]{%
-1	0\\
0	0\\
};
\addlegendentry{$n = 80$}

\addplot [color=black, dash pattern= on 5pt off 4pt, line width=1pt]
  table[row sep=crcr]{%
-1	0\\
0	0\\
};
\addlegendentry{$n = 100$}

\addplot [color=black, dash pattern=on 2pt off 3pt on 5pt off 3pt, line width=1pt]
  table[row sep=crcr]{%
-1	0\\
0	0\\
};
\addlegendentry{$n = 120$}

\addplot [color=black, dash pattern=on 2pt off 3pt, line width=1pt]
  table[row sep=crcr]{%
-1	0\\
0	0\\
};
\addlegendentry{$n = 140$}

\end{axis}
\end{tikzpicture}%~~~
        \input{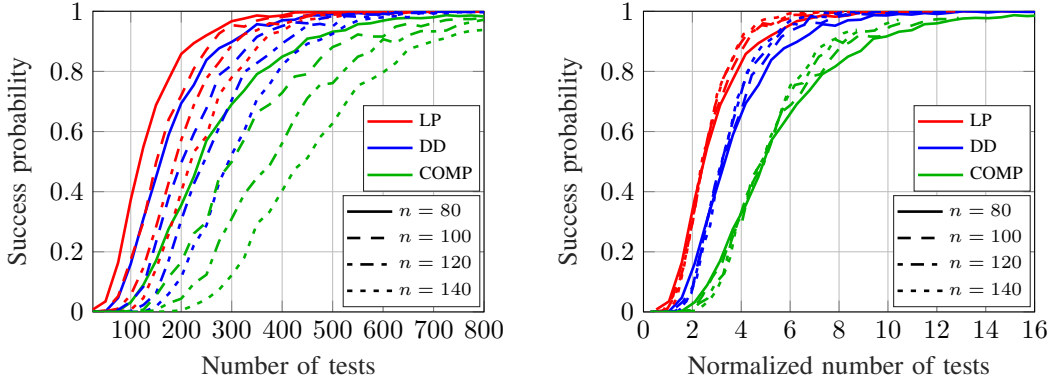}}
        \par
    \end{centering}
    \caption{(Left) Number of tests for COMP, DD, and LP under four different $(n,\overline{k})$ pairs: $n \in \{80,100,120,140\}$ and $\overline{k} = \frac{n}{10}$.  (Right) Normalized number of tests after division by $\overline{k}\log\frac{1}{q}$, where $q$ is the probability of each edge in the graph. \label{fig:more_exp}}
\end{figure} 

As predicted by our theory, the resulting curves for each algorithm are in general agreement after performing the normalization, with slight deviations due to noise and non-asymptotic considerations.  Moreover, according to the sparse regime of Figure \ref{fig:rates}, our theory suggests (for sufficiently sparse settings) an asymptotic threshold of roughly $1$ for the optimal algorithm (which LP approximates), roughly $2$ for DD, and slightly over $2$ for COMP.  The above figure is consistent with these numbers, though they are slightly increased because of the penalty incurred for finite $n$ (as opposed to $n \to \infty$), and possibly also the choice $\overline{k} = \frac{n}{10}$ (as opposed to  $\overline{k} \sim n^{2\theta}$).

\section{Results for General Edge and Degree Bounded Graphs} \label{sec:gen_d_k}

Since the assumption of independent edges is not always appropriate for modeling real-world applications, there is substantial motivation to develop performance bounds that hold with high probability for any given graph in a {\em deterministic graph class}.  

An impossibility result of \cite{Aba18} shows that if one only fixes the number of edges to $k$, then achieving $t = O(k \log n)$ scaling is not possible in the worst case.  A natural question is then whether fixing the number of edges $k$ {\em and} maximum degree $d$ results in $t = O(k \log n)$ scaling under suitable assumptions on $d$.  In this section, we argue that the answer is affirmative as long as $d = o(\sqrt{k})$.

Indeed, an inspection of our analysis reveals that the condition $d \le d_{\max}$ in the typical set \eqref{eq:typical} was not used directly, but rather, was only used to establish \eqref{eq:d_vs_k}.  On the other hand, the condition $\PP_G[Y=1] = (1-e^{-\nu})(1+o(1))$ played a significant role in our analysis, and it is unclear whether it can be deduced from the condition $d = o(\sqrt{k})$ alone.

However, while exactly characterizing $\PP_G[Y=1]$ for an arbitrary edge-bounded and degree-bounded graph $G$ may be difficult, we can easily find upper and lower bounds.  First, by the union bound, we have
\begin{equation}
    \PP_G[Y=1] \le k p^2 = \nu,
\end{equation}
under the choice $p = \sqrt{\frac{\nu}{k}}$.  As for the lower bound, applying de Caen's bound \cite{Dec97} and letting $A_1,\dotsc,A_k$ be the events of the $k$ edges having both their nodes included in the test, we have
\begin{align}
    \PP_G[Y=1] 
        &= \PP_G\bigg[ \bigcup_{i=1,\dotsc,k} \big\{ A_i \big\} \bigg] \\
        &\ge \sum_{i=1,\dotsc,k} \frac{ \PP_G[A_i]^2 }{ \PP_G[A_i] + \sum_{j \ne i} \PP_G[A_i \cap A_j] }.
\end{align}
Note that $\PP_G[A_i] = p^2$ for all $i$, since the two nodes of the edge need to be included.

Among the terms $\sum_{j \ne i} \PP_G[A_i \cap A_j]$, there are at most $2d$ terms for which the two associated edges share a node ($d$ per node times two nodes), and for those terms we have $\PP_G[A_i \cap A_j] = p^3$.  All other terms (of which there are at most $k$) have $\PP_G[A_i \cap A_j] = p^4$, and hence
\begin{align}
    \PP_G[Y=1] 
        &\ge \frac{ k p^4 }{ p^2 + 2dp^3 + kp^4 } \\
        &= \frac{\nu}{1+\nu} (1+o(1)),
\end{align}
where we have used $p = \sqrt{\frac{\nu}{k}}$ and $dp \ll 1$.

With the above upper and lower bounds on $\PP_G[Y=1]$ in place, upon fixing $\nu \in (0,1)$ (e.g., $\nu = \frac{1}{2}$), the rest of the analysis of COMP, DD, and SSS proceeds similarly to that done for the graphs in the typical set \eqref{eq:typical},\footnote{For SSS, we need to slightly strengthen the assumption $d = o(\sqrt{k})$ to $d = o\big(  \frac{\sqrt k}{\log n}\big)$; see the part of the proof following \eqref{eq:lim1}.} and yields analogous results with $O(k \log n)$ scaling, albeit slightly worse constant factors.  For GROTESQUE, the extension is even more immediate, since we did not use any characteristics of $\PP_G[Y=1]$ in its analysis.

\section{Challenges in the Analysis Compared to Standard Group Testing} \label{sec:differences}

Recall that the standard group testing problem concerns recovering a defective set $S \subseteq \{1,\dotsc,N\}$ of cardinality $K$ from a set of $N$ items, using a sequence of tests in groups of items \cite{Du93,Ald19}.  Each test returns one if there is at least one defective item in the test, and zero otherwise.  In this section, we highlight some of the main challenges arising in our analysis of COMP, DD, SSS, and GROTESQUE compared to their counterparts for standard group testing \cite{Cha11,Ald14a,Cai13}.

An immediate challenge is that in contrast with group testing, the analysis is not symmetric with respect to graphs having a given number of edges (e.g., the degree of each node also plays a major role).  Related to this issue is the fact that the events associated with including two different edges in a given test are {\em not} independent if those edges have a node in common.  As a a result, we frequently need to distinguish between error events for neighbors vs.~non-neighbors of a given node pair $(i,j)$.

For the COMP algorithm ({\em cf.}, Section \ref{sec:comp}), our analysis closely follows that of group testing after Lemma \ref{lem:typical} (establishing the asymptotic behavior of $\PP_G[Y=1]$) is established.  However, as seen in Appendix \ref{sec:pf_typical}, the proof of that lemma was in itself highly non-trivial.

For DD ({\em cf.}, Section \ref{sec:dd}), in addition to Lemma \ref{lem:typical}, additional effort is needed to handle the fact that different graphs lead to different probabilities of the key error events, which is not the case in group testing.  The first step (Section \ref{sec:dd_first}), distinguishes between the total number of non-edges in PE, and the number that actually share a node in common with a true edge, leading to several conditions on $t$ in \eqref{eq:dd_t1_1}--\eqref{eq:dd_t1_2} that luckily end up simplifying in Section \ref{sec:dd_combining}.  This also impacts the analysis of the second step (Section \ref{sec:dd_second}), where we require a careful analysis in \eqref{eq:qij_step1}--\eqref{eq:A2_step4} to characterize the key ``success event'' of including a given edge and no other pairs from the set $\PE$.

For SSS, ({\em cf.}, Section \ref{sec:sss}), we face similar difficulties in bounding the individual and pairwise events in \eqref{eq:deCaen2}, with a more delicate analysis leading to the remainder terms $\xi$ and $\xi'$ in \eqref{eq:xi} and \eqref{eq:pij_upper3}.  To complete the analysis after \eqref{eq:Perr_lb_2}, it is essential that these remainder terms not only behave as $o(1)$, but decay to zero {\em sufficiently fast}.

For GROTESQUE ({\em cf.}, Section \ref{sec:sublinear}), the multiplicity test ({\em cf.}, Section \ref{sec:mult_test}) is a fairly straightforward extension of that of standard group testing.  However, the location test ({\em cf.}, Section \ref{sec:loc_test}) is more novel.  In group testing, each item can be assigned a unique length-$L$ binary string, and the single defective item under consideration can trivially be identified using one test per bit. In our setting, we need to design tests that simultaneously identify two nodes, and when the two bits of the corresponding strings differ, we need to ensure that the bit assignments are done in a consistent manner across the $L$ indices ({\em cf.}, Step 3(b) in Section \ref{sec:loc_test}).  

\bibliographystyle{myIEEEtran}
\bibliography{JS_References}

\end{document}